\RequirePackage[l2tabu, orthodox]{nag}		
\documentclass[10pt]{article}

%
%

\usepackage[T1]{fontenc}
\usepackage{lmodern}
\usepackage{natbib}
\usepackage{graphicx}
\usepackage{caption}
\usepackage{subcaption}
\usepackage{microtype}
\usepackage[frenchmath]{mathastext}					
\usepackage{amsmath}                        
\numberwithin{equation}{section}
\usepackage{amsfonts}
\usepackage{graphicx}                      

\usepackage{natbib}    
\bibliographystyle{abbrvnat}
\setcitestyle{authoryear}
\usepackage[plainpages=false, pdfpagelabels]{hyperref} 
	\hypersetup{
		colorlinks   = true,
		citecolor    = RoyalBlue, 
		linkcolor    = RubineRed, 
		urlcolor     = Turquoise
	}
\usepackage[dvipsnames]{xcolor}
\usepackage[all]{xy}
\usepackage{amssymb}                        
\usepackage[mathscr]{eucal}                 
\usepackage{dsfont}													
\usepackage[paperwidth=8.5in,paperheight=11.00in,top=1.25in, bottom=1.25in, left=1.00in, right=1.00in]{geometry}
\usepackage{mathtools}                      
\mathtoolsset{showonlyrefs=true}            
\usepackage{amsmath}               
\MakeRobust{\eqref}
\linespread{1.3}                            
\usepackage{microtype}											
\usepackage{amsthm}                         
\allowdisplaybreaks                         
\theoremstyle{plain}
\newtheorem{theorem}{Theorem}
\numberwithin{theorem}{section}
\newtheorem{lemma}[theorem]{Lemma}          
\newtheorem{proposition}[theorem]{Proposition}

\theoremstyle{definition}

%
%


\renewcommand{\[}{\left[}


\newcommand\Eb{\mathds{E}}
\newcommand\Fb{\mathds{F}}

\newcommand\Pb{\mathds{P}}

\newcommand\Rb{\mathds{R}}


\newcommand\Ac{\mathscr{A}}

\newcommand\Cc{\mathscr{C}}

\newcommand\Fc{\mathscr{F}}

\newcommand\Ic{\mathscr{I}}

\newcommand\Oc{\mathscr{O}}

\newcommand\Tc{\mathscr{T}}


\newcommand\om{\omega}
\newcommand\Om{\Omega}
\newcommand\sig{\sigma}

\newcommand\Lam{\Lambda}
\newcommand\gam{\gamma}
\newcommand\Gam{\Gamma}

\newcommand\del{\delta}





\newcommand\hh{\widehat{h}}


\newcommand\Jt{\widetilde{J}}

\newcommand\ut{\widetilde{u}}

\newcommand\taut{\widetilde{\tau}}

\newcommand\Lamt{\widetilde{\Lam}}


\renewcommand\d{\partial}

\newcommand\dd{\mathrm{d}}
\newcommand\ee{\mathrm{e}}


%
%

\begin{document}

\title{Optimal times to buy and sell a home}

\author{
Matthew Lorig
\thanks{
Department of Applied Mathematics, University of Washington.
\textbf{e-mail}: \url{mlorig@uw.edu}}
\and
Natchanon Suaysom
\thanks{
Department of Applied Mathematics, University of Washington.
\textbf{e-mail}: \url{nsuaysom@uw.edu}}
}

\date{This version: \today}

\noindent

\maketitle



\begin{abstract}
We consider a financial market in which the risk-free rate of interest is modeled as a Markov diffusion.
We suppose that home prices are set by a representative home-buyer, who can afford to pay only a fixed cash-flow per unit time for housing.
The cash-flow is a fraction of the representative home-buyer's salary, which grows at a rate that is proportional to the risk-free rate of interest.
As a result, in the long-run, higher interest rates lead to faster growth of home prices.
The representative home-buyer finances the purchase of a home by taking out a mortgage.
The mortgage rate paid by the home-buyer is fixed at the time of purchase and equal to the risk-free rate of interest plus a positive constant.
As the home-buyer can only afford to pay a fixed cash-flow per unit time, a higher mortgage rate limits the size of the loan the home-buyer {can} take out.
As a result, the short-term effect of higher interest rates is to lower the value of homes.
In this setting, we consider an investor who wishes to buy and then sell a home in order to maximize his discounted expected profit.  
This leads to a nested optimal stopping problem.  
We use a nonnegative concave majorant approach to derive the investor's optimal buying and selling strategies.
Additionally, we provide a detailed analytic and numerical study of the case in which the risk-free rate of interest is modeled by a Cox-Ingersoll-Ross (CIR) process.
We also examine, in the case of CIR interest rates, the expected time {that the} investor waits before buying and then selling a home when following the optimal strategies.
\end{abstract}

\noindent
{\,}
\\[1em]
\textbf{Key words}: home buying/selling, optimal stopping, nested optimal stopping, nonnegative concave majorant.

%
%

\section{Introduction}
\label{sec:introduction}
While many consider a home merely as a place to live, it is also financial asset, the purchase and subsequent sale of which can generate a significant profit.  
The problem of buying and/or selling a home in order to minimize purchase price and/or maximize sale price or profit has been widely studied in academic literature. Various mathematical tools have been used to solve this problem, including multivariate probability theory, game theory, and optimal stopping theory.  For example, \cite{bruss1997multiple} assumes home prices follow a specific probability distribution, and derives the optimal stopping rules for buying and selling homes. \cite{anglin2004long} derives an optimal stopping strategy from the perspective of a representative-home buyer who is observing multiple other homes to purchase. \cite{egozcue2013optimal} and \cite{brown2013role} use an optimal stopping approach to optimize the profit in a home bidding process. \cite{albrecht2016directed} considers house buying and selling in game theoretic framework and derive prices at equilibrium. And \cite{leung2017flipping} use a housing market search model to derive prices at equilibrium.
\\[0.5em]
When deriving the optimal home-buying or home-selling strategy, one must consider a number of factors such as, e.g., interest rates, transaction costs, an investor's discount rate, the demand and supply of homes in certain locations, quality of nearby schools, etc..  Among the many factors one might consider, perhaps the most important is the interest rate. To illustrate the important role interest rates play in home prices, in Figure \ref{fig:house-rate-price}, we plot the Monthly S\&P/Case-Shiller U.S. National Home Price Index 
and the Weekly 30-Year Fixed Rate Mortgage Average in the United States 
from January 2019 to January 2021.  The figure clearly shows during this period that home prices were inversely related to interest rates.  This data is consistent with the theoretical results of \cite{leung2017flipping} who show that, in equilibrium, home {prices} are inversely related to interest rates. 
%
\\[0.5em]
In the present paper, we present a framework {that describes} how the risk-free rate of interest affects home prices.
Briefly, we suppose that home prices are set by a representative home-buyer, who can afford to pay only a fixed cash-flow per unit time for housing.
The cash-flow is a fraction of the representative home-buyer's salary, which grows at a rate that is proportional to the risk-free rate of interest.
As a result, in the long-run, higher interest rates lead to faster growth of home prices.
The representative home-buyer finances the purchase of a home by taking out a mortgage.
The mortgage rate paid by the home-buyer is fixed at the time of purchase and equal to the risk-free rate of interest plus a positive constant.
As the home-buyer can only afford to pay a fixed cash-flow per unit time, a higher mortgage rate limits the size of the loan the home-buyer {can} take out.
As a result, the short-term effect of higher interest rates is to lower the value of homes.
In this setting, we consider an investor {that} wishes to maximize his expected discounted profit from buying a home and selling it at a later time.  
As the optimal time to buy a home depends on the optimal time to sell a home, this leads to a \textit{nested optimal stopping problem}.
The main purpose of this paper is to solve this nested optimal stopping problem by providing an explicit characterization of the optimal buying and selling times when the risk-free rate of interest is modeled as a Markov diffusion and to provide a detailed study of the case in which the risk-free rate of interest is modeled as a Cox-Ingersoll-Ross (CIR) process.
\\[0.5em]
Mathematically, our problem formulation 
falls within a class of optimal stopping problems with {stochastic} discounting studied in \cite{dayanik2008optimal}. To obtain the investor's value function, we use the nonnegative concave majorant approach developed by \cite{dayanik2008optimal}.  This approach has been applied to a variety of optimal stopping problems. For instance, \cite{leung2014optimal} uses this approach to derive optimal strategies for the problem of starting-stopping a CIR process. And \cite{leung2015optimal} uses this approach to derive the optimal timing for trading with transaction costs where the trading price spread between two assets is modeled by an Ornstein–Uhlenbeck (OU) process.
\\[0.5em]
The rest of this paper proceeds as follows.
In Section \ref{sec:interest} we present a model for how the risk-free rate of interest affects home values.
Next, in Section \ref{sec:optimal}, we define the investor's optimal home-buying and home-selling problems.
The optimal home-buying and home-selling problems fall into a larger class of optimal stopping problems with stochastic discounting.
We provide a general solution to these optimal stopping problems in Section \ref{sec:J}.
In Section \ref{sec:cir} we focus specifically on the case in which the risk-free rate of interest is described by a CIR process.
We derive expressions for the value functions and optimal stopping times that correspond to the investor's optimal buying and selling problems.
Additionally, we calculate the expected time the investor waits to buy and then holds a home before selling, assuming he follows the optimal buying and selling strategies.  
Lastly, in Section \ref{sec:conclusion}, we offer some thoughts on future directions of research.


\section{The relation between interest rates and home values}
\label{sec:interest}
Throughout this paper, we fix a probability space $(\Om, \Fc, \Pb)$ and a filtration $\Fb = (\Fc_t)_{t \geq 0}$. The probability measure $\Pb$ represents the real world probability measure. In this setting, let $R = (R_t)_{t\geq 0}$ denotes the risk-free rate of interest.   We shall suppose that $R$ is a regular diffusion that lives on an interval $\Ic := (x,y)$, where the end points $x$ and $y$ are natural and satisfy $0 \leq x < y \leq \infty$.  Specifically, we suppose that $R$ is the unique strong solution to a stochastic differential equation (SDE) that is of the form
\begin{align}
    \dd R_t & = \mu(R_t) \dd t + \sig(R_t) \dd W_t, \label{eq:R-def}
\end{align} 
where $W = (W_t)_{t \geq 0}$ is a one-dimensional $(\Pb,\Fb)$-Brownian motion and the functions $\mu$ and $\sig$ satisfy
\begin{align}
\mu
	&: \Ic \to \Rb , &
\sig
	&: \Ic \to \Rb_{++} ,
\end{align}
with $\Rb_{++} := (0,\infty)$.
\\[0.5em]
The aim of this section is to develop a framework that captures how the dynamics of $R$ affect home values.  To this end, we consider a representative home-buyer who at time $t$ can afford to pay a cash flow of $(C_t)_{t\geq 0}$ per unit time for housing.  As time passes, the home-buyer's wages will increase and, as such, so will the amount of money he can afford to pay for housing.  To capture this effect, we suppose that the dynamics of the cash flows are as follows
\begin{align}
C_t
	&= C \ee^{ \gam \int_0^t R_s \dd s } , &
C
	&> 0 , &
\gam
	&> 0 . \label{eq:C}
\end{align}
Equation \eqref{eq:C} assumes that the amount of money the representative home-buyer can allocate to housing per unit time grows at a rate $\gam R$ that is proportional to the risk-free rate of interest.  If one considers the risk-free rate $R$ to be a proxy for inflation, then $\gamma$ captures how quickly the home-buyer's wages grow in real (as opposed to nominal) terms.  If $\gam > 1$ the home-buyer's wages grow faster than inflation and he is getting richer over time.  On the other hand, if $\gam < 1$ the home-buyer's wages are not keeping up with inflation and, over time, he is becoming poorer.
\\[0.5em]
Now, suppose that, at time $t$, the representative home-buyer has found a home he wishes to purchase.  In order to finance this purchase, he takes out a loan from a bank with a repayment period of $T$ years at a fixed interest rate $R_t + \rho$ where $\rho > 0$.  The constant $\rho$ captures the fact that home-buyer may default on his loan payments and, thus, should be charged an interest rate that is higher than the risk-free rate of interest.  As, at time $t$, the representative home-buyer can only afford to pay a cash-flow of $C_t$ per unit time, the maximum value of the home he can afford is
\begin{align}
\int_t^{t+T} C_t \ee^{ - (R_t + \rho) (u - t) } \dd u
	&=	\frac{C_t}{R_t + \rho} \Big( 1 - \ee^{- (R_t + \rho)T} \Big) .
\end{align}
Although home-buyers of different economic classes will be able to afford different cash-flows for housing, the relationship between the value of a home and the interest rate $R$ will be the same for all homes in the economy.  Thus, the value  $V = (V_t)_{t \geq 0}$ of any homes in the economy is given by 
\begin{align}
V_t
	&=			v(R_t) \ee^{ \gam \int_0^t R_s \dd s } , &
v(R_t)
	&:= \frac{C}{R_t + \rho} \Big( 1 - \ee^{- (R_t + \rho)T} \Big) , \label{eq:V}
\end{align}
where $C$ is a constant that captures the relative expense of a particular home; it will play no role in the analysis that follows.  It is important to notice that the interest rate $R$ has both a long-term and a short-term effect on the value $V$ of a home.  In the long-term, higher {interest} rates have the effect of raising the value of a home due to the term $\ee^{ \gam \int_0^t R_s \dd s }$.  In the short-term, the effect of interest rates on home values is captured by $v(R_t)$.  Using the fact that $\ee^x > 1+x$ for any $x > 0$, we have that
\begin{align}
    v'(r) & = - \frac{C \ee^{- (r+\rho)T}}{(r+\rho)^2} \left(\ee^{ (r+\rho)T} - 1 - (r+\rho)T \right) < 0. \label{eq:v-diff}
\end{align}
This means that $v(r)$ is a decreasing function of $r$, and that in the short-term, higher interest rates have the effect of lowering the value of a home. The dynamics of $V$ is given by
\begin{align}
    \dd V_t & = \bigg( \gam R_t +\frac{1}{v(R_t)}\Big(\mu(R_t) v'(R_t) + \frac{1}{2}\sig^2(R_t) v''(R_t)\Big) \bigg)V_t \dd t +  \frac{v'(R_t)\sig(R_t)}{v(R_t)} V_t \dd W_t.
\end{align}
Note that while $V$ alone is not a Markov process, the pair $(R,V)$ is Markov.

\section{Optimal home buying and selling problems}
\label{sec:optimal}
Having described the relationship between the risk-free rate of interest $R$ and home values $V$, we now consider an investor who wishes to buy and then sell a home in order to maximize the present value of these transactions.  
Note that, as short-selling of homes is not allowed, we will not consider cases in which the investor first sells and then later buys back a home.
We will suppose that for a payment $P_t$ received at time $t$ the investor assigns a present value of $\Eb ( \ee^{-\chi \int_0^t R_s \dd s} P_t )$, where $\chi > 0$ is a discount rate that is specific to the investor.  The larger the value of $\chi$, the more heavily the investor discounts future payments. 
One can alternatively consider constant discounting of the form $\Eb ( \ee^{-\chi t} P_t )$.  This case is discussed in Appendix \ref{sec:constant-discount}.
\\[0.5em]
Let us denote by $\tau_b$ and $\tau_s$, respectively, the times at which the investor buys and sells a home. 
In general, $\tau_b$ and $\tau_s$ will be (random) $\Fb$-stopping times.  
Because the investor is not purchasing a primary residence, the interest rate he would pay were he to take out a loan for a home would be very high.  As such, we will suppose that the investor pays cash for a home.  The amount of money the investor will need to pay at time $\tau_b$ to buy a home will be
\begin{align}
\text{Cost of home purchase}
	&=	V_{\tau_b} (1+\del_b) + K_{b,\tau_b}, &
\del_b 
	&> 0 , &
K_{b,t} 
	&:=  K_b \ee^{ \gam \int_0^{t} R_s \dd s } , &
K_b
	&> 0 , \label{eq:costs}
\end{align}
where $\del_b$ represents a transaction cost that is proportional to the value of a home price (e.g., a fee to a realtor) and $K_{b,\tau_b}$ represents fixed transaction costs (e.g., fees paid to a title company).  Note that the fixed transaction cost $K_{b,\tau_b}$ grows over time due to inflation whereas the proportional transaction cost $\del_b V_{\tau_b}$ scales with the value of a home.  Similarly, when the investor sells a home he has purchased, he will receive
\begin{align}
\text{Revenue from home sale}
	&=	V_{\tau_s} (1-\del_s) - K_{s,\tau_s}, &
\del_s 
	&> 0 , &
K_{s,t} 
	&:=  K_s \ee^{ \gam \int_0^{t} R_s \dd s } , &
K_s
	&> 0 , \label{eq:revenue}
\end{align}
where $\del_s$ and $K_{s,\tau_s}$ capture proportional and fixed transaction costs, respectively.
\\[0.5em]
Although chronologically, the investor must buy a home before he sells it, we will consider the optimal selling problem first.
Let $\Tc$ be the set of $\Fb$-stopping times.
For a fixed selling strategy $\tau_s \in \Tc$ the expected discounted revenue the investor will receive from selling the home is
\begin{align}
J_s^{\tau_s}(r)
	&:= \Eb \Big[
			\ee^{-\chi \int_0^{\tau_s} R_s \dd s} (V_{\tau_s}  (1 - \del_s) - K_{s,\tau_s}) \Big| R_0 = r \Big] .
\end{align}
Recalling the relationship \eqref{eq:V} between $V$ and $R$, and introducing the process $\Lam = (\Lam_t)_{t \geq 0}$, defined by
\begin{align}
\Lam_t 
	&:= (\chi-\gam)\int_{0}^{t} \, R_s \dd s , \label{eq:Lambda}
\end{align}
we can re-write $J_s^{\tau_s}(r)$ more compactly as follows
\begin{align}
J_s^{\tau_s}(r)
	&=	\Eb \Big[ \ee^{-\Lam_{\tau_s}} f_s(R_{\tau_s}) \Big| R_0 = r \Big] , &
f_s(r)
	&:=  v(r)(1-\del_s)-K_s . \label{eq:fs-def}
\end{align}
In order to maximize the present value of the revenue received from selling a home, the investor will need to maximize {$J_s^{\tau_s}$} over all stopping times $\tau_s \in \Tc$.
We therefore define the \textit{selling value function} $J_s$ and \textit{optimal selling strategy} $\tau_s^*$ (assuming it exists) as follows
\begin{align}
J_s(r)
	&:=	\sup_{\tau_s \in \Tc} {J_s^{\tau_s}(r)} =: J_s^{\tau_s^*}(r) . \label{eq:Js-def}
\end{align}
Now, let us assume that the investor will follow the optimal selling strategy $\tau_s^*$.  Then, for a fixed buying strategy $\tau_b$, the expected discounted profit he will receive from buying and then selling a home is given by
\begin{align}
J_b^{\tau_b}(r)
	&:=	\Eb \Big[
			\ee^{-\chi \int_0^{\tau_b} R_s \dd s} J_s(R_{\tau_b}) 
			- \ee^{-\chi \int_0^{\tau_b} R_s \dd s} (V_{\tau_b}  (1+\del_b)+K_{b,\tau_b}) 
			\Big| R_0 = r \Big] .
\end{align}
Recalling the relationship \eqref{eq:V} between $V$ and $R$, the definition  \eqref{eq:Lambda} of $\Lambda$ and the definition \eqref{eq:Js-def} of $J_s$, we can express $J_b^{\tau_b}(r)$ more compactly as follows
\begin{align}
{J_b^{\tau_b}(r)}
	&=	\Eb \Big[  \ee^{-\Lam_{\tau_b}} f_b( R_{\tau_b} ) \Big| R_0 = r \Big ] , &
f_b(r) 
	& :=  J_s(r) - \big(v(r)(1+\del_b) + K_b \big) . \label{eq:fb-def}
\end{align}
In order to maximize the present value of the purchase and sale of a home, the investor will need to maximize $J_b^{\tau_b}$ over all stopping times $\tau_b \in \Tc$.
We therefore define the \textit{buying value function} $J_b$ and the \textit{optimal buying strategy} $\tau_b^*$ (assuming it exists) as follows
\begin{align}
J_b(r)
	&:=	\sup_{\tau_b \in \Tc} {J_b^{\tau_b}(r) =: J_b^{\tau_b^*}(r) .} \label{eq:Jb-def}
\end{align}
Note that $J_s$ and $J_b$ are the special cases of a class of \textit{optimal stopping problems with {stochastic} discounting} of the form
\begin{align}
    J(r) &:= \sup_{\tau \in \Tc} J^\tau(r) =: J^{\tau^*}(r) , &
		J^\tau(r) &:= \Eb \Big[\ee^{-\Lam_{\tau}} f(R_{\tau}) \Big| R_0 = r \Big] . \label{eq:J-def}
\end{align}
Note also that, in order for a nontrivial optimal stopping time of \eqref{eq:J-def} to exist, we must have $\Lam > 0$.  Thus, we assume that $\chi > \gam$ throughout this paper.  We shall refer to $J$ and $\tau^*$ (with no subscripts) as the \textit{value function} and \textit{optimal stopping time}, respectively. For ease of notation, in the sections that follow, we will use $J$ in an expression that holds true for both $J_b$ and $J_s$, $\tau^*$ in an expression that holds true for $\tau_b^*$ and $\tau_s^*$ and $f$ in an expression that holds true for $f_b$ and $f_s$.
\\[0.5em]
Before deriving explicit characterizations of the optimal buying and selling times, let us examine qualitatively what $\tau^*_b$ and $\tau^*_s$ should look like.  Recall that the short-term effect of the risk free rate of interest $R$ on home prices is captured by $v(R_t)$, which is a decreasing function of $R_t$.
As the investor will want to buy a home when prices are relatively low, we expect that the optimal buying strategy $\tau_b^*$ will involve waiting until interest rates $R$ rise to a value $r_b$ called the \textit{buying threshold}.  Similarly, as the investor will want to sell when home prices are relatively high, we expect that the optimal selling strategy $\tau_s^*$ will involve waiting until the risk-free rate of interest $R$ falls to a value $r_s$ call the \textit{selling threshold}, where $r_s < r_b$.  In other words, we expect the optimal buying and selling strategies to be of the form
\begin{align}
\tau_b^*
    &:= \inf\{ t \geq 0 : R_t \geq r_b \} , &
\tau_s^*
    &:= \inf\{ t \geq 0 : R_t \leq r_s \} , \label{eq:tau-intuitive}
\end{align}
where $x < r_s < r_b < y$.

\section{Expressions for the value function $J$ and optimal stopping time $\tau^*$}
\label{sec:J}
In this section, we present the expressions for the value function $J$ and optimal stooping time $\tau^*$, which are defined in \eqref{eq:J-def}.  The expressions can be applied to the optimal selling problem \eqref{eq:Js-def} and optimal buying problem \eqref{eq:Jb-def}.
\\[0.5em]
To begin, let $\Ac$ {denote} the infinitesimal generator of the risk-free rate of interest process $R$. We have 
\begin{align}
\Ac
	&=	\mu(r) \d_r + \tfrac{1}{2} \sig^2(r) \d_r^2 . \label{eq:A-def}
\end{align}
Consider the following ordinary differential equation (ODE) for a function $u: \Ic \to \Rb$
 \begin{align}
    \Big(\Ac - (\chi-\gam)r \Big) u(r) 
			& = 0 . \label{eq:uode}
\end{align}
Suppose that \eqref{eq:uode} has two independent solutions $u = (u_+,u_-)$ such that $u_+$ is positive and strictly increasing and $u_-$ is positive and strictly decreasing.  It is well-known (see \cite[Equation (5)]{dayanik2008optimal}, for instance) that the functions $u_+$ and $u_-$ are related to the hitting times of the process $R$ as follows
\begin{align}
    \Eb\Big[\ee^{-\Lam_{\tau_c}} {\Big|} R_0 = r\Big] & = \begin{cases}
        u_+(r)/u_+(c) & r \leq c
        \\
        u_-(r)/u_-(c) & r > c
    \end{cases} , &
\tau_c
	&:= \inf\{ t \geq 0 : R_t = c \} ,
    \label{eq:psiphi_stopping}
\end{align}
where $c,r \in \Ic$.
\\[0.5em]
Next, we define the functions $g : \Ic \to g(\Ic)$ and $h : g(\Ic) \to \Rb$, which will be used in the expression of $J$ by
\begin{align}
    g(r) &:= -\frac{u_-(r)}{u_+(r)},\quad r\in \Ic, & 
    h(q) &:=  \frac{f(g^{-1}(q))}{u_+(g^{-1}(q))}, \quad  q \in g(\Ic) . \label{eq:hdef}
\end{align}
We define $h_b$ and $h_s$ from $f_b$ and $f_s$, respectively, in the same way we define $h$ from $f$.  To ease the notation, we use $h$ to represent expressions that hold true for both $h_b$ and $h_s$. Because $u_+$ is strictly positive increasing and $u_-$ is strictly positive decreasing, $g$ is strictly negative increasing, which means that $g^{-1}$ is well defined. The following proposition shows that the value function $J$ can be written in terms of $u_+$, $u_-$, $g$, and $h$.
\begin{proposition}
\label{th:J}
Suppose that the risk-free rate of interest $R$ is defined by \eqref{eq:R-def} on an interval $\Ic = (x,y)$ where $x$ and $y$ are natural boundaries. Let the functions $f$, {$u_+$, $u_-$}, $g$ and $h$ be as defined in \eqref{eq:J-def}, \eqref{eq:uode} and \eqref{eq:hdef}. If  both of the following limits are finite
\begin{align}
    \ell_x & := \lim_{r \to x^+} \frac{f^+(r)}{u_-(r)} , & 
		\ell_y & := \lim_{r \to y^-} \frac{f^+(r)}{u_+(r)} , &
		f^+(r) &:= \max(f(r),0) ,
\end{align}
then the value function $J$ defined in \eqref{eq:J-def} can be written as
\begin{align}
J(r) & =  u_+(r)\hh\big(g(r)\big), & r \in \Ic, \label{eq:Jsol}
\end{align}
where $\hh$ is the \emph{smallest decreasing nonnegative concave majorant (NCM)} of $h$.
\end{proposition}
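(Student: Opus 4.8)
The plan is to reduce the stochastic-discounting problem \eqref{eq:J-def} to a canonical optimal stopping problem whose value is a concave majorant, and the engine of the reduction is the pair of discounted processes
\begin{align}
M_t^\pm := \ee^{-\Lam_t}\,u_\pm(R_t).
\end{align}
Because $u_\pm$ solve \eqref{eq:uode}, Itô's formula produces a drift equal to $\ee^{-\Lam_t}\big(\Ac-(\chi-\gam)R_t\big)u_\pm(R_t)\,\dd t = 0$, so $M^+$ and $M^-$ are nonnegative local $(\Pb,\Fb)$-martingales. The structural observation I would build on is that for an affine function $q\mapsto a+bq$ one has $M_t^+\,(a+b\,g(R_t)) = a\,M_t^+ - b\,M_t^-$, using $g=-u_-/u_+$ from \eqref{eq:hdef}; this is a linear combination of local martingales, hence a local martingale. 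Writing a concave function as the pointwise infimum of the affine functions dominating it, it then follows that for any nonnegative concave $\hh$ the process $S_t := M_t^+\,\hh(g(R_t))$ is an infimum of local martingales, hence a local supermartingale, and being nonnegative it is a genuine $\Pb$-supermartingale. (Equivalently, under the measure with density $M^+/u_+(r)$ the ratio $g(R)$ is a local martingale, since $M^+g(R)=-M^-$, and $\hh$ concave of a local martingale is a supermartingale; I would use this only as intuition and work under $\Pb$ to avoid the strict-local-martingale subtleties at natural boundaries.)

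The upper bound is then immediate. Given any nonnegative concave majorant $\hh\ge h$, the definition of $h$ in \eqref{eq:hdef} gives $S_t \ge M_t^+\,h(g(R_t)) = \ee^{-\Lam_t}f(R_t)$, so optional sampling for the nonnegative supermartingale $S$ yields, for every $\tau\in\Tc$,
\begin{align}
\Eb\big[\ee^{-\Lam_\tau}f(R_\tau)\,\big|\,R_0=r\big] \;\le\; \Eb\big[S_\tau\,\big|\,R_0=r\big] \;\le\; S_0 \;=\; u_+(r)\,\hh\big(g(r)\big).
\end{align}
Taking the supremum over $\tau$ and then minimizing over majorants shows $J(r)\le u_+(r)\,\hh(g(r))$ with $\hh$ the smallest nonnegative concave majorant of $h$.

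For the reverse inequality and to pin down the qualifiers, I would first record the geometry of the transformed problem. Since both boundaries are natural, $u_+\to\infty$ with $u_-$ finite at $y$ and $u_-\to\infty$ with $u_+$ finite at $x$, so $g(\Ic)=(-\infty,0)$ with $g\to 0^-$ as $r\to y$ and $g\to-\infty$ as $r\to x$. Using $u_-(r)=-g(r)\,u_+(r)$, the two hypotheses translate into $h^+(q)\to \ell_y$ as $q\to 0^-$ and $h^+(q)\sim \ell_x\,\lvert q\rvert$ as $q\to-\infty$; hence $h^+$ is bounded at the right endpoint and grows at most linearly at the left, which is exactly what guarantees that the smallest nonnegative concave majorant is finite rather than identically $+\infty$. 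On the domain $(-\infty,0)$ any nonnegative concave function is automatically nonincreasing (a strictly positive slope anywhere would, by concavity, force the function to $-\infty$ as $q\to-\infty$), so the word \emph{decreasing} is the natural normalization and nonnegativity itself comes from admitting $\tau\equiv\infty$, which contributes value $0$ because $\chi>\gam$ makes $\Lam\to\infty$. The attainment then follows the standard contact-set construction: $\hh$ is affine on each component of $\{\hh>h\}$, so there $S$ is a true martingale, and stopping $R$ at the first time $g(R)$ enters $\{\hh=h\}$ — equivalently a hitting time of a threshold in $r$, as anticipated in \eqref{eq:tau-intuitive} — turns the supermartingale inequality into an equality and attains $u_+(r)\hh(g(r))$.

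The step I expect to be the main obstacle is precisely this attainment near the natural boundaries: one must show that the martingale $S$ restricted to the continuation region is uniformly integrable up to the (possibly infinite) hitting time, so that no value leaks to $x$ or $y$ and optional sampling holds with equality. This is where the finiteness of $\ell_x$ and $\ell_y$ is essential — they control $h^+$, and thus the candidate value $u_+\,\hh\circ g$, at the endpoints, allowing Fatou's lemma to give the supermartingale bound on one side and uniform integrability (from the affineness of $\hh$ on the continuation region together with the linear bound on $h^+$) to give the matching lower bound on the other. Assembling these boundary estimates, rather than the algebra of the reduction, is the crux of the argument.
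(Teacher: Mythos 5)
Your proposal is correct and is, in substance, the proof the paper points to: the paper's own proof of Proposition \ref{th:J} is a one-line citation to \cite[Proposition 3.4]{dayanik2008optimal}, and your argument --- the local martingales $\ee^{-\Lam_t}u_\pm(R_t)$ obtained from \eqref{eq:uode}, affine functions of $g(R)$ realized as linear combinations of these, the supporting-line/supermartingale upper bound, the translation of the finiteness of $\ell_x,\ell_y$ into at-most-linear growth of $h^+$ on $g(\Ic)=(-\infty,0)$ (which makes the smallest nonnegative concave majorant finite and renders the qualifier ``decreasing'' automatic), and martingale attainment on the contact set --- is precisely the Dayanik--Karatzas concave-majorant machinery underlying that cited result. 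The only step to phrase more carefully is the assertion that a pointwise infimum of local martingales is a local supermartingale (false in general): instead, note that each affine majorant $L \geq \hh \geq 0$ yields a \emph{nonnegative} local martingale $a M^+ - b M^-$, hence a true supermartingale by Fatou, and then condition at time $s$ using the supporting line of $\hh$ at $g(R_s)$ --- a one-line repair using ingredients already present in your write-up.
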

\begin{proof}
See \cite[Proposition 3.4]{dayanik2008optimal}.
\end{proof}

\noindent
It is well-known (see, for instance \cite[Appendix D]{shreve1994optimal}) that the optimal stopping time $\tau^*$ can be computed from $J$ as follows
\begin{align}
\tau^* &  := \inf \{t \geq 0 : R_t \notin \Cc  \} , &
	&\text{where}&
\Cc & := \{r \in \Ic : J(r) > f(r)\} . \label{eq:tau_sol}
\end{align}
We refer to the set $\Cc$ as the \textit{continuation region}.

\section{Detailed analysis: CIR process risk-free rate}
\label{sec:cir}
In this section, we derive the expressions of $J$ and $\tau^*$ when the risk-free rate of interest is modeled by a CIR process. Specifically, suppose that the dynamics of risk-free rate of interest $R$ is given by 
\begin{align}
\dd R_t
	&=	\kappa (\theta - R_t ) \dd t + \sig \sqrt{R_t} \dd W_t, \label{eq:CIR}
\end{align}
where $\kappa,\theta,\sig > 0$.  We shall assume the Feller's condition $2 \kappa \theta > \sig^2$ is satisfied, which guarantees that $R$ never reaches zero. Note that $R$ is regular on $\Ic = (0,\infty)$ and both boundaries $0$ and $\infty$ are natural. Using \eqref{eq:A-def}, the infinitesimal generator $\Ac$ of the CIR process is given by
\begin{align}
\Ac
	&=	\kappa (\theta - r ) \d_r + \tfrac{1}{2} \sig^2 r \d_r^2 . \label{eq:A-cir}
\end{align}
 Using this specific infinitesimal generator \eqref{eq:A-cir}, the ODE \eqref{eq:uode} can be written as
\begin{align}
    \Big(\kappa (\theta - r ) \d_r + \tfrac{1}{2} \sig^2 r \d_r^2-(\chi-\gam)r\Big) u(r) & = 0. \label{eq:uspecific}
\end{align}
In Appendix \ref{sec:solveu}, we derive explicit expressions for positive increasing and positive decreasing solutions, $u_+$ and $u_-$, of \eqref{eq:uspecific}, which are given by
\begin{align}
        u_+(r) & = \ee^{-\nu r}  M(\alpha,\beta,\zeta r),
        &
        u_-(r) & = \ee^{-\nu r}  U(\alpha,\beta,\zeta r),\label{eq:u12}
\end{align}
where {$(\alpha,\beta,\xi,\zeta,\nu)$} are defined as follows
\begin{align}
    \alpha & :=\frac{\kappa \theta  }{{\sigma ^2}} \left(1-\frac{\kappa }{\xi}\right),
    &
    \beta & := \frac{2\kappa \theta   }{\sigma ^2}, & \xi & := \sqrt{\kappa^2 + 2 \sig^2(\chi-\gam)},
    &
    \zeta & := \frac{2 \xi}{\sigma ^2}, & \nu & := \frac{\alpha\zeta}{\beta} = \frac{\xi-\kappa}{\sig^2}, \label{eq:u12_param}
\end{align}
and where  $M$ and $U$ are the confluent hypergeometric function of the first kind and  second kind, respectively, as defined in \eqref{eq:MU_def}.  As $\kappa, \theta, \sig > 0$ and $\chi > \gam$, all parameters in \eqref{eq:u12_param} are positive, which allows us to write the following limit properties of $u_+$ and $u_-$
\begin{align}
    \lim_{r\to 0^+} u_+(r) &= 0, & 
    \lim_{r \to \infty} u_+(r) &= \infty, & 
    \lim_{r\to 0^+} u_-(r)  &= \infty, &  
    \lim_{r \to \infty} u_-(r) &= 0 . \label{eq:ulim}
\end{align}
We will use the limits in \eqref{eq:ulim} to verify the limit conditions of Proposition \ref{th:J}.
\\[0.5em]
In order to apply Proposition \ref{th:J} to determine the expressions for the value function $J$, it is necessary to determine $\hh$, the NCM of $h$. To that end, we need to know the sign of the slope and convexity of $h$ throughout $g(\Ic)$. Using the definition of $h$ in \eqref{eq:hdef} directly, the first and second derivative of $h$ are given by (with the shorthand $r:=g^{-1}(q)$)
\begin{align}
        h'(q) & =  \frac{1}{g'(r)} \frac{u_+(r)f'(r)-u_+'(r)f(r)}{\big(u_+(r)\big)^2}, \label{eq:h-diff1}
        \\
        h''(q) & = \frac{2}{\sig^2 ru_+(r) \big(g'(r)\big)^2} \Big(\Ac - (\chi-\gam)r\Big)f(r). \label{eq:h-diff2}
\end{align}
Equations \eqref{eq:h-diff1} and \eqref{eq:h-diff2} will be used to identify the critical and inflection points of $h$, which will then be used to calculate $\hh$, the NCM of $h$. We now have the necessary tools to derive the expressions for buying and selling value functions.

\subsection{Optimal home selling problem}
Although chronologically the investor will have to buy a home before being able to sell it, the optimal selling problem must be solved before the optimal buying problem due to the fact that the form of $f_b$ in \eqref{eq:fb-def} requires having known the selling value function $J_s$.  To derive the expression of $J_s$, we first define $h_s$ from $f_s$ the same way we define $h$ from $f$ in \eqref{eq:hdef} by
\begin{align}
    h_s(q) & :=  \frac{f_s\big(g^{-1}(q)\big)}{u_+\big(g^{-1}(q)\big)}, &  q < 0. \label{eq:hs-def}
\end{align}
It is straightforward to check using \eqref{eq:fs-def} and \eqref{eq:ulim} that
\begin{align}  \lim_{r\to 0^+} \frac{f^+_s(r)}{u_-(r)} & = 0, &
\lim_{r\to \infty} \frac{f^+_s(r)}{u_+(r)} & = 0.
\end{align}
This shows that the limit conditions in Proposition \ref{th:J} are satisfied. Next, we need to identify $\hh_s$, the NCM of $h_s$, which is done in Appendix \ref{sec:hhs_cal}. We have from \eqref{eq:hhs-explicit} that
\begin{align}
\hh_s(q) & = \begin{cases}
        h_s(q),  & q \leq q_s
        \\
        q\frac{h_s(q_s)}{q_s}, & q > q_s
        \end{cases}, & q_s := g(r_s),
\end{align}
where the selling threshold $r_s$ is the unique positive solution to equation \eqref{eq:rs}, which we repeat here for the reader's convenience
\begin{align}
\frac{u_-'(r_s)}{u_-(r_s)} & = \frac{f'_s(r_s)}{f_s(r_s)}.
\end{align}
Having confirmed that the limit conditions are satisfied and identified the NCM of $h_s$, we now apply Proposition \eqref{th:J} to explicitly write $J_s$ using \eqref{eq:Jsol} and \eqref{eq:hhs-explicit} as
\begin{align}
    J_s(r) & = u_+(r)\hh_s\big(g(r)\big) = \begin{cases}
        u_+(r) h_s\big(g(r)\big)  = f_s(r) & r \leq r_s
        \\
        u_+(r)g(r) \frac{h_s\big(g(r_s)\big)}{g(r_s)} = f_s(r_s)\frac{u_-(r)}{u_-(r_s)} & r > r_s
    \end{cases}. \label{eq:Js-sol}
\end{align}
Next, from \eqref{eq:tau_sol} we can calculate the selling continuation region and optimal selling time as
\begin{align}
    \Cc_s & := \{r : J_s(r) > f_s(r)\} = (r_s,\infty), & \tau^*_s &  := \inf\{  t \geq 0 : R_t \leq r_s\}.
    \label{eq:taus_sol}
\end{align}
In words, the investor's optimal selling strategy is to sell his home the first time the risk-free rate of interest is at or below $r_s$.
Note that the form of $\tau^*_s$ agrees with our previous speculation of the form of the optimal selling strategy in $\eqref{eq:tau-intuitive}$. 

\subsection{Optimal home buying problem}
Having obtained the optimal selling strategy $\tau_s^*$, we now turn our attention to finding the optimal buying strategy $\tau_b^*$.
To begin, we define $h_b$ from $f_b$ in  the same way we define  $h$ from $f$ in \eqref{eq:hdef} by
\begin{align}
    h_b(q) & :=  \frac{f_b\big(g^{-1}(q)\big)}{u_+\big(g^{-1}(q)\big)}, &  q < 0. \label{eq:hb-def}
\end{align}
Using the form of $f_b$ \eqref{eq:fb-def}, the limit expressions \eqref{eq:ulim}, and the explicit form of $J_s$ \eqref{eq:Js-sol}, it is straightforward to confirm that
\begin{align} \lim_{r\to 0^+} \frac{f^+_b(r)}{u_-(r)} & = 0, &
\lim_{r\to \infty} \frac{f^+_b(r)}{u_+(r)} & = 0.
\end{align}
Thus, the limit conditions in Proposition \ref{th:J} are satisfied. Next we need to identify $\hh_b$, the NCM of $h_b$, which is done in Appendix \ref{sec:hhb_cal}. We have from \eqref{eq:hhb-explicit} that
\begin{align}
    \hh_b(q) & = \begin{cases}
        h_b(q_b) & q \leq q_b
        \\
        h_b(q) &  q > q_b
    \end{cases}, & q_b:= g(r_b),
\end{align}
where the buying threshold is the unique positive solution to the following equation \eqref{eq:rb}, which we repeat here for the reader's convenience
\begin{align}
\frac{u_+'(r_b)}{u_+(r_b)} & = \frac{f'_b(r_b)}{f_b(r_b)}. 
\end{align}
Having confirmed that the limit conditions are satisfied and identified the NCM of $h_b$, we now apply Proposition \ref{th:J} to explicitly write $J_b$ using \eqref{eq:Jsol} and \eqref{eq:hhb-explicit} as
\begin{align}
    J_b(r) & = u_+(r)\hh_b\big(g(r)\big)
    =
    \begin{cases}
        u_+(r) h_b\big(g(r_b)\big)  = f_b(r_b)\frac{u_+(r)}{u_+(r_b)} & r \leq r_b
        \\
        u_+(r) h_b\big(g(r)\big) = f_b(r) &  r > r_b
    \end{cases} .
    \label{eq:Jb-sol}
\end{align}
Next, from \eqref{eq:tau_sol} we calculate the buying continuation region and the optimal buying time as
\begin{align}
    \Cc_b & := \{r : J_b(r) > f_b(r)\} = (0,r_b), & \tau^*_b &  := \inf\{  t \geq 0 :R_t \geq r_b\}.
    \label{eq:taub_sol}
\end{align}
In other words, the investor's optimal buying {strategy} is to purchase a {home} the first time the risk-free rate of interest is at or above $r_b$. 
Note again that the form of the optimal buying rule agrees with our speculation of the optimal buying strategy as described in \eqref{eq:tau-intuitive}.

\subsection{{Density} of waiting time}
\label{sec:wait}
The goal of this section is to derive the {densities} and expected values of the optimal selling and buying times $\tau^*_s$ and $\tau^*_b$, which are characterized by \eqref{eq:taus_sol} and \eqref{eq:taub_sol}, respectively.  These quantities are important because, for example, if the expected value of either $\tau^*_b$ or $\tau^*_s$ are on the order of 100s of years, then it would not be practical for an investor to implement the optimal buying and/or selling strategies.
\\[0.5em]
To begin our analysis, let us define the probability density functions of $\tau_b^*$ and $\tau_s^*$.  We have
\begin{align}
p_{\tau_b^*}(t;r)
	&:= \frac{\dd}{\dd t} \Pb( \tau_b^* \leq t | R_0 = r ) , &
r
	&< r_b , \\
p_{\tau_s^*}(t;r)
	&:= \frac{\dd}{\dd t} \Pb( \tau_s^* \leq t | R_0 = r ) , &
r
	&> r_s .
\end{align}
Note {that} we have restricted the definitions of $p_{\tau_b^*}$ and $p_{\tau_s^*}$ to cases in which $r < r_b$ and $r > r_s$ because if $r \geq r_b$ we have trivially that $\tau_b^* = 0$ and if $r \leq r_s$ we have trivially that $\tau_s^* = 0$.
Note also that $p_{\tau_b^*}$ is the density of the first hitting time of $R$ to level $r_b$ from below and $p_{\tau_s^*}$ is the density of the first hitting time of $R$ to level $r_s$ from above.  The first hitting time densities for the CIR process are computed explicitly in \cite[Proposition 1]{linetsky2004computing}, which we present below using the notation of the present paper.
\begin{proposition}
\label{th:taudis}
Suppose that the risk-free rate of interest $(R_t)_{t \geq 0}$ is a CIR process defined in \eqref{eq:CIR} with parameters $(\kappa, \theta, \sig)$ that satisfies Feller's condition. Suppose that the initial interest rate $r$, the buying threshold $r_b$, and the selling threshold $r_s$ are such that $r_s<r< r_b$. Let $\big(k_{b,n}(r_b)\big)_{n \geq 1},\big(k_{s,n}(r_s)\big)_{n \geq 1} $ be the decreasing negative sequences that are all negative roots of the equations
\begin{align}
    M(k_{b,n}(r_b),\beta,\om r_b) &  = 0, & 
    U(k_{s,n}(r_s),\beta,\om r_s) & = 0, &
    \beta & := \frac{2\kappa \theta}{\sig^2}, & \om & := \frac{\beta}{\theta}, 
\end{align}
respectively,and $\big(m_{b,n}(r,r_b)\big)_{n \geq 1},\big(m_{s,n}(r,r_s)\big)_{n \geq 1}$ by
\begin{align} m_{b,n}(r,r_b) & := -\frac{M(k_{b,n}(r_b),\beta,\om r)}{k_{b,n}(r_b) \frac{\d }{\d k}M(k,\beta,\om r_b)|_{k=k_{b,n}(r_b)}},
    \\
    m_{s,n}(r,r_s) &:= -\frac{U(k_{s,n}(r_s),\beta,\om r)}{k_{s,n}(r_s) \frac{\d }{\d k}U(k,\beta,\om r_s)|_{k=k_{s,n}(r_s)}}.
\end{align}
Then the probability density functions of $\tau_b^*$ and $\tau_s^*$ are given by
\begin{align}
    p_{\tau_b^*}(t;r) & = -\kappa\sum_{n=1}^{\infty} m_{b,n}(r,r_b)  k_{b,n}(r_b) \ee^{\kappa k_{b,n}(r_b) t}, \label{eq:pdf-buy}
    \\
		p_{\tau_s^*}(t;r) & = -\kappa\sum_{n=1}^{\infty}
		m_{s,n}(r,r_s)  k_{s,n}(r_s) \ee^{\kappa k_{s,n}(r_s) t}, \label{eq:pdf-sell}
\end{align}
respectively. The uniform convergence of the infinite series \eqref{eq:pdf-buy} and \eqref{eq:pdf-sell} are proven in \cite[Proposition 2]{linetsky2004computing}.
\end{proposition}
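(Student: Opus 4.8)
The plan is to recover the densities \eqref{eq:pdf-buy}--\eqref{eq:pdf-sell} by Laplace inversion of the first-passage Laplace transforms, following \cite{linetsky2004computing}. By \eqref{eq:taub_sol} and \eqref{eq:taus_sol}, $\tau_b^*$ is the first passage time of $R$ to $r_b$ from below (so $r<r_b$) and $\tau_s^*$ is the first passage time to $r_s$ from above (so $r>r_s$). For a constant Laplace parameter $s$ with $\Re s>0$, the relevant ODE is \emph{not} \eqref{eq:uode} but the constant-potential equation $(\Ac-s)u=0$, i.e. $\kappa(\theta-r)u'(r)+\tfrac12\sig^2 r\,u''(r)-s\,u(r)=0$. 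Exactly as in the derivation of \eqref{eq:u12}, the substitution $x=\om r$ with $\om=2\kappa/\sig^2=\beta/\theta$ turns this into Kummer's equation $x\,u''+(\beta-x)u'-(s/\kappa)u=0$, whose increasing and decreasing solutions in $r$ are $\psi_s(r)=M(s/\kappa,\beta,\om r)$ and $\phi_s(r)=U(s/\kappa,\beta,\om r)$. Applying the standard one-dimensional diffusion passage-time identity (the constant-discount analogue of \eqref{eq:psiphi_stopping}, with $\psi_s,\phi_s$ in place of $u_+,u_-$) then yields
\begin{align}
\Eb\big[\ee^{-s\tau_b^*}\,\big|\,R_0=r\big] &= \frac{M(s/\kappa,\beta,\om r)}{M(s/\kappa,\beta,\om r_b)}, &
\Eb\big[\ee^{-s\tau_s^*}\,\big|\,R_0=r\big] &= \frac{U(s/\kappa,\beta,\om r)}{U(s/\kappa,\beta,\om r_s)}.
\end{align}

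Next I would invert these transforms by viewing each as a meromorphic function of $s$ and applying Bromwich inversion together with the residue theorem, closing the contour to the left. The poles come from the zeros of the denominators: for the buying transform they occur at $s=\kappa k_{b,n}(r_b)$, where $k_{b,n}(r_b)$ are the roots of $M(k,\beta,\om r_b)=0$ in the first argument, and likewise $s=\kappa k_{s,n}(r_s)$ for the selling transform via $U(k,\beta,\om r_s)=0$. These are the eigenvalues of the CIR generator on $(0,r_b)$ (resp. $(r_s,\infty)$) with an absorbing boundary at the threshold, hence real, negative, simple, and discrete. Each pole is simple, so, using $\tfrac{\dd}{\dd s}M(s/\kappa,\beta,\om r_b)=\tfrac1\kappa\,\tfrac{\d}{\d k}M(k,\beta,\om r_b)\big|_{k=k_{b,n}}$, the residue of $\ee^{st}$ times the buying transform at $s=\kappa k_{b,n}$ equals $\kappa\,\ee^{\kappa k_{b,n}t}\,M(k_{b,n},\beta,\om r)\big/\tfrac{\d}{\d k}M(k,\beta,\om r_b)\big|_{k=k_{b,n}}$, which by the definition of $m_{b,n}(r,r_b)$ is precisely $-\kappa\,m_{b,n}(r,r_b)\,k_{b,n}(r_b)\,\ee^{\kappa k_{b,n}(r_b)t}$. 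Summing over $n$ gives \eqref{eq:pdf-buy}, and the selling case is identical with $U$ replacing $M$, producing \eqref{eq:pdf-sell}.

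The hard part is not the residue bookkeeping but justifying the inversion itself: one must show that the Bromwich contour can be deformed into the sum of residues, i.e. that the integrals over the large left-hand arcs vanish, that $\{\kappa k_{b,n}\}$ is exactly the complete set of singularities with no others, and that the resulting series converges (uniformly in $t$ on compact sets away from $0$). This requires the Sturm--Liouville spectral theory identifying the zeros $k_{b,n},k_{s,n}$ as the simple negative eigenvalues of the killed generator, together with uniform growth/decay estimates on $M$ and $U$ as $|s|\to\infty$ in the complex plane. These analytic estimates are established in \cite[Proposition 1]{linetsky2004computing} for the densities and \cite[Proposition 2]{linetsky2004computing} for the uniform convergence, so I would reduce the statement to those two results rather than reprove the contour and convergence bounds from scratch.
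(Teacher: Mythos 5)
Your proposal is correct and takes essentially the same approach as the paper: the paper presents this proposition as a restatement of \cite[Proposition 1]{linetsky2004computing} in its own notation, deferring the density formulas and the convergence of the series \eqref{eq:pdf-buy}--\eqref{eq:pdf-sell} entirely to \cite[Propositions 1 and 2]{linetsky2004computing}, which is exactly where your argument also bottoms out. Your Laplace-transform/residue sketch (Kummer equation with first parameter $s/\kappa$, poles at $s=\kappa k_{b,n}$ and $s=\kappa k_{s,n}$, residues matching $-\kappa\, m_{\cdot,n} k_{\cdot,n}\ee^{\kappa k_{\cdot,n}t}$) is an accurate reconstruction of how the cited result is derived, so no gap remains beyond what both you and the paper delegate to that reference.
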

From \cite[Equation 19 and 20]{linetsky2004computing} the coefficients $ k_{b,n}(r_b)$ and $m_{b,n}(r,r_b)$  have the following large-$n$ asymptotics
\begin{align}
    k_{b,n}(r_b) & = \Oc(-n^2), & \vert m_{b,n}(r,r_b) \vert & = \Oc(\frac{1}{n}),  \label{eq:buy-bigo}    
\end{align}
and using \cite[Equation 23 and 24]{linetsky2004computing}, the coefficients $k_{s,n}(r_s)$ and $m_{s,n}(r,r_s)$ have the following large-$n$ asymptotics 
\begin{align}
     k_{s,n}(r_s) & = \Oc(-n), & \vert m_{s,n}(r,r_s) \vert & = \Oc(\frac{1}{n}).  \label{eq:sell-bigo}
\end{align}
The large-$n$ asymptotics of the coefficients in \eqref{eq:buy-bigo} and \eqref{eq:sell-bigo} guarantee that the infinite sums in the computation of expectations, which we perform below in \eqref{eq:buy-ave}, \eqref{eq:sell-ave} and \eqref{eq:all-ave}, converge absolutely. Thus, the infinite sums and integrals can be exchanged.
\\[0.5em]
Using Proposition \ref{th:taudis} we can compute the expected length of time the investor will wait prior to buying a home assuming he follows the optimal  buying strategy.  We have
\begin{align}
\Eb\Big(\tau^*_{b} \Big| R_0 = r < r_b \Big)
& = \int_0^{\infty} t \, p_{\tau^*_b}(t;r) \dd t 
	= -\kappa\sum_{n=1}^{\infty}\int_0^{\infty}  m_{b,n}(r,r_b)  k_{b,n}(r_b) t\ee^{\kappa k_{b,n}(r_b) t} \dd t 
	\\
	& = -\frac{1}{\kappa}\sum_{n=1}^{\infty} \frac{m_{b,n}(r,r_b)}{k_{b,n}(r_b)} . \label{eq:buy-ave}
\end{align}
Similarly, the expected length of time the investor will wait prior to selling a home after buying it assuming he follows the optimal buying and selling strategies is
\begin{align}
\Eb \Big(\tau^*_s \Big| R_0 = r_b \Big) 
	& = \int_0^{\infty} t \, p_{\tau_s^*}(t;r_b) \dd t 
	= -\kappa\sum_{n=1}^{\infty}\int_0^{\infty}  m_{s,n}(r_b,r_s)  k_{s,n}(r_s) t\ee^{\kappa k_{s,n}(r_s) t} \dd t
	\\
	&
	= -\frac{1}{\kappa}\sum_{n=1}^{\infty} \frac{m_{s,n}(r_b, r_s)}{k_{s,n}(r_s)}. \label{eq:sell-ave}
\end{align}
\\[0.5em]
Lastly, we are interested to know the probability density function of $\tau_b^* + \tau_s^*$ the total time the investor waits to buy and then sell a home, assuming he follows the optimal buying and selling strategies.  The probability density function of $\tau_b^*+\tau_s^*$, given by
\begin{align}
    p_{\tau_b^*+\tau_s^*}(t;r):= \frac{\dd}{\dd t} \Pb( \tau_b^*+\tau_s^* \leq t | R_0 = r),
\end{align}
can be calculated as a convolution of the two probability densities \eqref{eq:pdf-buy} and \eqref{eq:pdf-sell}.  We have
\begin{align}
p_{\tau_b^* + \tau_s^*}(t;r)
	&=	\int_0^t p_{\tau_b^*}(t';r) p_{\tau_s^*}(t-t';r_b) \dd t'
	\\
	& = \kappa^2\int_{0}^t \left(\sum_{i=1}^{\infty}m_{b,i}(r,r_b) k_{b,i}(r_b) \ee^{\kappa k_{b,i}(r_b) t'} \sum_{j=1}^{\infty}m_{s,j}(r_b,r_s) k_{s,j}(r_s) \ee^{\kappa k_{s,j}(r_s) (t-t')} \right) \dd t'
	\\
	& = \kappa^2 \sum_{i,j=1}^{\infty} \ee^{\kappa k_{s,j}(r_s)t} \int_{0}^t m_{b,i}(r,r_b) k_{b,i}(r_b) m_{s,j}(r_b,r_s) k_{s,j}(r_s) \ee^{\kappa(k_{b,i}(r_b)-k_{s,j}(r_s)) t'} \dd t' 
	\\
	& =  \kappa^2 \sum_{i,j=1}^{\infty} m_{b,i}(r,r_b) k_{b,i}(r_b) m_{s,j}(r_b,r_s) k_{s,j}(r_s) \frac{\ee^{\kappa k_{b,i}(r_b)t}-\ee^{\kappa k_{s,j}(r_s) t}}{\kappa k_{b,i}(r_b) -\kappa k_{s,j}(r_s)}. \label{eq:pdf-both}
\end{align}
The expectation of $\tau^*_b+\tau^*_s$ is simply the sum of expectations of $\tau^*_b$ and $\tau^*_s$, which are given {in \eqref{eq:buy-ave} and \eqref{eq:sell-ave}.}

\subsection{Numerical Example}
\label{sec:numerical}
Throughout this section we fix the following parameter values
\begin{align}
\left. \begin{aligned}
 \kappa & = 0.9, & \theta & = \frac{0.08}{0.9}, & \sig & = \sqrt{0.033},
\\
 \gam & = 0.4, & \chi & = 0.6, & r & = 0.08, 
\\
 C & = \$ 100,000 , & 
\rho & = 0.01, & T & =  30 \text{ (years)}, 
\\
\del_b & = \del_s = 0.06 , & K_b & = K_s = \$5000. 
\end{aligned} \right\} \label{eq:ex_param}
\end{align}
The parameters specific to the CIR model $(\kappa,\theta,\sig)$ and initial risk-free rate of interest $r$ were taken from \cite[Example 10.3.2.2]{filipovic2009term}. Note that the parameters $(\kappa,\theta,\sigma)$ defined in \eqref{eq:ex_param} satisfy the {Feller} condition $(2\kappa\theta > \sig^2)$. 
The duration of the loan $(T = 30 \text{ years})$ is standard for a fixed-rate mortgage in the United States.
The fixed and proportional transaction costs are also typical for a US-based mortgage.
\\[0.5em]
In Figure \ref{fig:J_sol}, we plot $J_s$ and $J_b$  using the expressions of the selling and buying value function \eqref{eq:Js-sol}, and \eqref{eq:Jb-sol}. Note that $J_b(r)$ is an increasing function of $r$ because the short term home price is inversely related to interest rate. Likewise, the function $J_s(r)$ is a decreasing function of $r$. Next, using \eqref{eq:rs} and \eqref{eq:rb}, we obtain numerically the selling and buying threshold $r_s \approx 0.026$ and $r_b \approx 0.167$. We plot the probability density function of $\tau_b^*$, the length of time the investor waits before buying, the probability density function of $\tau_s^*$, the length of time the investor holds a home before selling, and the probability density function of $\tau_b^* + \tau_s^*$, the sum of both waiting times in Figure \ref{fig:pdfwait}.
Finally, in order to compute the expected length of time the investor waits before buying a home and the expected length of time the investor holds a home before selling it, assuming he follows the optimal strategies, we truncate the infinite sums in \eqref{eq:buy-ave} and \eqref{eq:sell-ave} at 100 terms and 
obtain
\begin{align}
\Eb\Big(\tau^*_{b} \Big| R_0 = r < r_b \Big)
	&\approx 8.108 , &
\Eb\Big(\tau^*_{s} \Big| R_0 = r_b  \Big)
	&\approx 11.301, & \Eb\Big(\tau^*_b+\tau^*_s \Big| R_0 = r  \Big)
	&\approx 19.409. \label{eq:all-ave}
\end{align}
These expectations are shown as vertical bars in their respective graphs in Figure \ref{fig:pdfwait}.

\section{Conclusion}
\label{sec:conclusion}
In this paper, we have provided an expression for home prices as a function of risk-free rate of interest and its time integral, and the rate of wage growth. In this setting, we have considered an investor who wishes to maximize the discounted expected profit from buying a home and selling it at a later time. Using the expression of home prices, we have defined the optimal home buying and selling problems as a nested optimal stopping problem, for which its value function and optimal stopping rule can be characterized using a nonnegative concave majorant approach.  When the risk-free rate of interest is modeled by a CIR process, we have provided an explicit characterization of the optimal buying and selling times. Additionally, in the case of CIR interest rates, we have analyzed the expected time the investor waits before buying as well as the expected time the investor waits  before selling a home,  assuming he follows the optimal buying and selling strategies. In future work, we plan to extend our results to include a scenario where the investor repeatedly buys and sells homes.

\clearpage
\appendix 
\section{Expressions for $u_+$ and $u_-$}
\label{sec:solveu}
We solve \eqref{eq:uspecific} following \cite{carmona2007investment}. Consider the substitution $u(r):= \ee^{-\nu r}v(r)$ where 
\begin{align}
    \nu & := \frac{\xi-\kappa}{\sig^2}, & \xi & := \sqrt{\kappa^2 + 2 \sig^2(\chi-\gam)},
\end{align}
then \eqref{eq:uspecific} simplifies to
\begin{align}
    r v''(r) + \Big( \frac{2\kappa \theta}{\sig^2}-\frac{2\xi}{\sig^2}r \Big) v'(r) + 2 \frac{\kappa \theta\nu}{\sig^2}v(r) & = 0. \label{eq:v-ode}
\end{align}
Performing the change of variable $v(r): = w(\zeta r)$ where $\zeta:= \frac{2\xi}{\sig^2}$ in \eqref{eq:v-ode} we obtain that $w(r)$ satisfies
\begin{align}
    r w''(r) + ( \frac{2\kappa \theta}{\sig^2}-r)w'(r) - \frac{\kappa \theta}{\sig^2}(1-\frac{\kappa}{\xi})w(r) = 0,
\end{align}
which, with the shorthand $\alpha: = \frac{\kappa}{ \theta}{\sig^2}(1-\frac{\kappa}{\xi}), \beta:= \frac{2 \kappa \theta   }{\sigma ^2}$, can be written as
\begin{align}
        rw''(r)+(\beta-r)w'(r)-\alpha w(r) = 0. \label{eq:sKummer}
\end{align}
Equation \eqref{eq:sKummer} is commonly known as \emph{Kummer's Equation} which has two independent solutions $w = (w_+,w_-)$ where
\begin{align}
    w_+(r) & = M(\alpha,\beta,r)= M\Big(\frac{\kappa \theta}{\sig^2}(1-\frac{\kappa}{\xi}),\frac{2 \kappa \theta   }{\sigma ^2},r\Big),
    \\
    w_-(r) & = U(\alpha,\beta,r) = U\Big(\frac{\kappa \theta}{\sig^2}(1-\frac{\kappa}{\xi}),\frac{2 \kappa\theta   }{\sigma ^2},r\Big),
\end{align}
and where $M$ and $U$ are \textit{Confluent Hypergeometric Function} (CHF) of the first kind and second kind, defined by
\begin{align}
    M(\alpha,\beta,r) & = \sum_{n=0}^{\infty} \frac{\alpha(\alpha+1)\ldots (\alpha+n)}{\beta(\beta+1)\ldots (\beta+n)} \frac{r^n}{n!},
    \\
    U(\alpha,\beta,r) & = \frac{\Gam (1-\beta)}{\Gam (\alpha+1-\beta)}M(\alpha,\beta,r)+\frac{\Gam (\beta-1)}{\Gam (\alpha)}r^{1-\beta}M(\alpha+1-\beta,2-\beta,r),
    \\
    & = \frac{1}{\Gam(\alpha)}\int_{0}^{\infty} \dd t \, \ee^{-rt} t^{\alpha-1} (1+t)^{\beta-\alpha-1},
    \label{eq:MU_def}
\end{align}
and $\Gam$ is the \emph{{Euler} gamma function}. Note that since $\chi > \gam$, then the parameters $\alpha,\beta, \nu,\xi$ are all positive. Substitute back $w,v$ into $u$ we obtain $u = \big(u_+(r),u_-(r)\big) = \big(\ee^{-\nu r} w_+(\zeta r), \ee^{-\nu r} w_-(\zeta r)\big)$ which is the form of \eqref{eq:u12}. It is clear that since each parameter in the argument of CHF is positive, $u_+$ and $u_-$ are positive. Next we will show that $u_+$ and $u_-$ are strictly increasing and decreasing, respectively. First we establish some basic {properties of CHFs, which are well known.}
\begin{lemma} We have that the derivatives of CHF of the first and second kind are
\begin{align}
    \frac{\dd}{\dd r} M(\alpha,\beta,\zeta r) & = \frac{\alpha\zeta}{\beta} M(\alpha+1,\beta+1,\zeta r), &
    \frac{\dd}{\dd r} U(\alpha,\beta,\zeta r) & = -\alpha\zeta U(\alpha+1,\beta+1,\zeta r). \label{eq:MU_diff}
\end{align}
\end{lemma}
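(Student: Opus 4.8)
The plan is to prove the two identities separately: the formula for $M$ from its power-series representation in \eqref{eq:MU_def}, and the formula for $U$ from its integral representation (the last line of \eqref{eq:MU_def}), reducing each case to the effect of differentiation on the Pochhammer products that appear in the coefficients.

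First I would treat $M$. Since $M(\alpha,\beta,\cdot)$ is entire, its power series converges locally uniformly and may be differentiated term by term. Writing the $n$th coefficient with the rising factorial $(a)_n := a(a+1)\cdots(a+n-1)$, so that it equals $(\alpha)_n/[(\beta)_n\, n!]$, the chain rule gives $\frac{\dd}{\dd r}M(\alpha,\beta,\zeta r) = \zeta \sum_{n \geq 1} \frac{(\alpha)_n}{(\beta)_n}\frac{(\zeta r)^{n-1}}{(n-1)!}$. Shifting the summation index by one and using the elementary factorizations $(\alpha)_{n+1} = \alpha\,(\alpha+1)_n$ and $(\beta)_{n+1} = \beta\,(\beta+1)_n$, each summand becomes $\frac{\alpha}{\beta}\frac{(\alpha+1)_n}{(\beta+1)_n}\frac{(\zeta r)^n}{n!}$, and summing yields exactly $\frac{\alpha\zeta}{\beta}M(\alpha+1,\beta+1,\zeta r)$. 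This establishes the first identity on all of $\Ic$.

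Next I would treat $U$. Because $\alpha > 0$ and $\zeta r > 0$ on $\Ic = (0,\infty)$ (all parameters in \eqref{eq:u12_param} are positive), the integral representation $U(\alpha,\beta,x) = \frac{1}{\Gam(\alpha)}\int_0^\infty \ee^{-xt}\, t^{\alpha-1}(1+t)^{\beta-\alpha-1}\,\dd t$ is valid for $x = \zeta r$. I would differentiate under the integral sign: the $x$-derivative of the integrand is $-t\,\ee^{-xt}\,t^{\alpha-1}(1+t)^{\beta-\alpha-1}$, which on any set $x \in [x_0,\infty)$ with $x_0 > 0$ is dominated by the integrable function $t^{\alpha}(1+t)^{\beta-\alpha-1}\ee^{-x_0 t}$, justifying the interchange. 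This gives $\frac{\dd}{\dd x}U(\alpha,\beta,x) = -\frac{1}{\Gam(\alpha)}\int_0^\infty \ee^{-xt}\,t^{\alpha}(1+t)^{\beta-\alpha-1}\,\dd t$. The key observation is that this remaining integral equals $\Gam(\alpha+1)\,U(\alpha+1,\beta+1,x)$, since replacing $(\alpha,\beta)$ by $(\alpha+1,\beta+1)$ in the representation sends $t^{\alpha-1}\mapsto t^{\alpha}$ and leaves the exponent $\beta-\alpha-1$ of $(1+t)$ unchanged. Using $\Gam(\alpha+1) = \alpha\,\Gam(\alpha)$ then yields $\frac{\dd}{\dd x}U(\alpha,\beta,x) = -\alpha\,U(\alpha+1,\beta+1,x)$, and a final chain rule in $x = \zeta r$ produces the stated $-\alpha\zeta\,U(\alpha+1,\beta+1,\zeta r)$.

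Neither step presents a genuine obstacle; the only points needing care are the justifications for interchanging differentiation with summation (immediate, as $M$ is entire) and with integration (the domination bound above). The single bookkeeping fact that makes the $U$ computation work cleanly, and which I expect to be the one thing worth stating explicitly, is that the exponent of $(1+t)$ is invariant under $(\alpha,\beta)\mapsto(\alpha+1,\beta+1)$, so that raising both parameters by one amounts to exactly the extra factor of $t$ produced by differentiation.
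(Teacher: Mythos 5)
Your proof is correct, and the two halves compare differently against the paper. For $M$ you do essentially what the paper does: differentiate the power series term by term and absorb the index shift, via $(\alpha)_{n+1}=\alpha\,(\alpha+1)_n$ and $(\beta)_{n+1}=\beta\,(\beta+1)_n$, into the prefactor $\alpha\zeta/\beta$; the paper writes the same shift with explicit products instead of Pochhammer symbols. Where you genuinely diverge is the $U$ identity. The paper obtains it from the connection formula in \eqref{eq:MU_def} expressing $U(\alpha,\beta,\cdot)$ as a combination of $M(\alpha,\beta,\cdot)$ and $r^{1-\beta}M(\alpha+1-\beta,2-\beta,\cdot)$, then invokes the $M$-derivative formula (``similar calculation''); carried out in full, that route produces three $M$-terms from the product rule (including $M(\alpha+2-\beta,3-\beta,\cdot)$) which must be recombined into $-\alpha\zeta\,U(\alpha+1,\beta+1,\zeta r)$ via contiguous relations, a piece of bookkeeping the paper leaves implicit. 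You instead differentiate the Laplace-type integral representation (the last line of \eqref{eq:MU_def}) under the integral sign, with a correct domination bound on $x\geq x_0>0$, and exploit the fact that the exponent $\beta-\alpha-1$ of $(1+t)$ is invariant under $(\alpha,\beta)\mapsto(\alpha+1,\beta+1)$, so the extra factor $-t$ from differentiation is exactly accounted for by $\Gam(\alpha+1)=\alpha\,\Gam(\alpha)$. This buys two things: the computation is genuinely closed, with no hidden identity left to verify, and it sidesteps the degeneracy of the connection formula when $\beta$ is a positive integer (poles of $\Gam(1-\beta)$ and $\Gam(\beta-1)$), requiring only $\alpha>0$ and $\zeta r>0$, both guaranteed by \eqref{eq:u12_param}. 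The paper's route has the mild advantage of staying entirely within the series definition of $M$; yours is the more self-contained and more robust of the two.
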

\begin{proof}
We can show \eqref{eq:MU_diff} by noting that
\begin{align}
    \frac{\dd}{\dd r} M(\alpha,\beta,\zeta r) & = \sum_{n=0}^{\infty} \frac{\dd}{\dd r} \frac{\alpha(\alpha+1)\ldots (\alpha+n)}{\beta(\beta+1)\ldots (\beta+n)} \frac{\zeta^n r^n}{n!}
    \\
    & = \sum_{n=1}^{\infty}  \frac{\alpha(\alpha+1)\ldots (\alpha+n)}{\beta(\beta+1)\ldots (\beta+n)} \frac{\zeta^n r^{n-1}}{(n-1)!}
    \\
    & = \frac{\alpha\zeta}{\beta} \sum_{n=0}^{\infty}  \frac{(\alpha+1)\ldots (\alpha+1+n)}{(\beta+1)\ldots (\beta+1+n)} \frac{\zeta^n r^{n}}{n!} = \frac{\alpha\zeta}{\beta}M(\alpha+1,\beta+1,r).
\end{align}
Using the relationship between $M$ and $U$ in \eqref{eq:MU_def} and the derivative formula of $M$, we perform similar calculation to obtain the derivative formula for $U$.
\end{proof}
\begin{lemma} {The function} $u_+$ is increasing and the function $u_-$ is decreasing.
\end{lemma}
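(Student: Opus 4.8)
The plan is to differentiate each of $u_+$ and $u_-$ directly, substitute the derivative formulas from \eqref{eq:MU_diff}, and read off the sign of the result; the case of $u_-$ will be immediate, while for $u_+$ everything reduces to comparing two confluent hypergeometric functions of the first kind.

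First I would treat $u_-$. Differentiating $u_-(r) = \ee^{-\nu r} U(\alpha,\beta,\zeta r)$ by the product rule and using the $U$-derivative in \eqref{eq:MU_diff} gives
\begin{align}
u_-'(r) = -\ee^{-\nu r}\Big( \nu\, U(\alpha,\beta,\zeta r) + \alpha\zeta\, U(\alpha+1,\beta+1,\zeta r) \Big).
\end{align}
Since $\chi > \gam$ makes $\nu,\alpha,\zeta$ all strictly positive, and the integral representation in \eqref{eq:MU_def} shows $U(\cdot,\cdot,x)>0$ for $x>0$ (the integrand is positive and $\Gam(\alpha)>0$), the bracketed quantity is positive, so $u_-'(r)<0$ and $u_-$ is strictly decreasing.

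Next I would treat $u_+$. The same computation, now using the $M$-derivative in \eqref{eq:MU_diff}, gives $u_+'(r) = \ee^{-\nu r}\big( \tfrac{\alpha\zeta}{\beta} M(\alpha+1,\beta+1,\zeta r) - \nu\, M(\alpha,\beta,\zeta r)\big)$. The crucial observation is that $\nu = \alpha\zeta/\beta$ by \eqref{eq:u12_param}, so this factors as $\tfrac{\alpha\zeta}{\beta}\ee^{-\nu r}\big( M(\alpha+1,\beta+1,\zeta r) - M(\alpha,\beta,\zeta r)\big)$, and it suffices to show $M(\alpha+1,\beta+1,x) > M(\alpha,\beta,x)$ for $x>0$. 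I would establish this by comparing the defining power series termwise: writing $(a)_n$ for the Pochhammer symbol, the ratio of the $n$-th coefficients is $\tfrac{(\alpha+1)_n}{(\beta+1)_n}\big/\tfrac{(\alpha)_n}{(\beta)_n} = \tfrac{\beta(\alpha+n)}{\alpha(\beta+n)}$, which exceeds $1$ for every $n\geq 1$ exactly when $\beta>\alpha$.

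The hard part, then, is the single inequality $\beta > \alpha$, and this is where I expect the only real work to lie. It follows from the parameter definitions \eqref{eq:u12_param}: because $\xi = \sqrt{\kappa^2 + 2\sig^2(\chi-\gam)} > \kappa$, one has $0 < 1 - \kappa/\xi < 1$ and hence $\alpha/\beta = \tfrac12(1-\kappa/\xi) < 1$. Granting $\beta > \alpha$, every $n\geq 1$ term of $M(\alpha+1,\beta+1,x)$ strictly dominates the corresponding (positive) term of $M(\alpha,\beta,x)$ while the $n=0$ terms agree, so the strict inequality holds, $u_+'(r) > 0$, and $u_+$ is strictly increasing.
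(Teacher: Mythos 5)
Your proof is correct and follows essentially the same route as the paper's: both differentiate via the product rule, use the identity $\nu = \alpha\zeta/\beta$ to reduce the sign of $u_+'$ to the comparison $M(\alpha+1,\beta+1,\zeta r) > M(\alpha,\beta,\zeta r)$, settle that comparison termwise in the power series using the key inequality $\beta > \alpha$, and handle $u_-$ by reading off signs from the positivity of $U$ in its integral representation. The only cosmetic difference is that you compare coefficients via the ratio $\beta(\alpha+n)/\bigl(\alpha(\beta+n)\bigr) > 1$, whereas the paper factors the coefficient difference as $(\alpha-\beta)$ times a positive quantity — the same inequality $\beta > \alpha$ does the work in both.
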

\begin{proof}
Note that using \eqref{eq:MU_diff} we obtain
\begin{align}
    \frac{\dd u_+(r)}{\dd r} & = \frac{\dd}{\dd r}\Big(\ee^{-\nu r}  M(\alpha,\beta,\zeta r)\Big) = \ee^{-\nu r}\Big(-\nu M(\alpha,\beta,\zeta r) + \frac{\alpha \zeta}{\beta}M(\alpha+1,\beta+1,\zeta r) \Big)
    \\
    & = -\nu\ee^{-\nu r}\Big( M(\alpha,\beta,\zeta r)- M(\alpha+1,\beta+1,\zeta r) \Big)
    \\
    & = -\nu\ee^{-\nu r} \sum_{n=0}^{\infty}\Big( \frac{\alpha(\alpha+1)\ldots (\alpha+n)}{\beta(\beta+1)\ldots (\beta+n)}- \frac{(\alpha+1)\ldots (\alpha+1+n)}{(\beta+1)\ldots (\beta+1+n)} \Big) \frac{(\zeta r)^n}{n!}
    \\
    & = -\nu\ee^{-\nu r} (\alpha-\beta) \sum_{n=0}^{\infty}\frac{(\alpha+1)\ldots (\alpha+n)}{(\beta+1)\ldots (\beta+n)}\Big( \frac{1+n}{\beta(\beta+1+n)} \Big) \frac{(\zeta r)^n}{n!}
    \\
    & = \ee^{-\nu r}\frac{\alpha \zeta}{\beta} \frac{\kappa\theta}{\sig^2}(1+\frac{\kappa}{\xi}) \sum_{n=0}^{\infty}\frac{(\alpha+1)\ldots (\alpha+n)}{(\beta+1)\ldots (\beta+n)}\Big( \frac{1+n}{\beta(\beta+1+n)} \Big) \frac{(\zeta r)^n}{n!} > 0. \label{eq:u+diff}
\end{align}
This means that $u_+$ is increasing. Note that using \eqref{eq:MU_diff} we obtain
\begin{align}
    \frac{\dd u_-(r)}{\dd r} = \frac{\dd}{\dd r}\Big(\ee^{-\nu r}  U(\alpha,\beta,\zeta r)\Big) & = \ee^{-\nu r}\Big(-\nu U(\alpha,\beta,\zeta r) - \alpha \zeta U(\alpha+1,\beta+1,\zeta r) \Big) < 0. \label{eq:u-diff}
\end{align}
Since $\nu,\alpha,\zeta > 0$,  $U(\alpha,\beta,\zeta r)>0$ and  $U(\alpha+1,\beta+1,\zeta r) > 0$ which is clear from the integral representation of $U$ in $\eqref{eq:MU_def}$, $u_-$ is decreasing.
\end{proof}
\section{Expressions for $\hh_s$ and $\hh_b$}
\label{sec:hcal}
    \subsection{Expressions for $\hh_s$}
    \label{sec:hhs_cal}
    We plot $h_s$ defined in \eqref{eq:hs-def} using parameters in \eqref{eq:ex_param} in Figure \ref{fig:hs} which shows that $h_s(q)$ is concave for $q \in (-\infty, q_s^*)$ and becomes convex for $q \in (q_s^*,0)$ where $q^*_s$ is the inflection point of $h_s$. {The value of} $q^*_s$ can be obtained numerically by using the expression for $h''_s(q)$ from \eqref{eq:h-diff2} and setting $h''_s(q^*_s) = 0$. Suppose that the tangent line of $h_s$ passing through $0$ intersects $h_s$ at $\big(q_s,h_s(q_s)\big)$ for some point $q_s$. We can solve for $q_s$ from \eqref{eq:qs}
\begin{align}
    \frac{h_s(q_s)}{q_s} = h_s'(q_s). \label{eq:qs}
\end{align}
Substituting $q_s = g(r_s)$ and using \eqref{eq:h-diff1} we obtain
\begin{align}
    \frac{h_s(q_s)}{q_s} & = -\frac{f_s(r_s)}{u_-(r_s)}, & h_s'(q_s) & = \frac{f'_s(r_s)u_+(r_s) -f_s(r_s)u_+'(r_s)}{u_+'(r_s) u_-(r_s)-u_+(r_s) u_-'(r_s)}. \label{eq:qs-sim}
\end{align}
Using \eqref{eq:qs-sim} we can rewrite \eqref{eq:qs} in terms of $r_s$ by
\begin{align}
   \frac{u_-'(r_s)}{u_-(r_s)} & = \frac{f'_s(r_s)}{f_s(r_s)}. \label{eq:rs}
\end{align}
{Equation} \eqref{eq:rs} can be solved numerically to obtain the selling threshold $r_s$. From Figure \ref{fig:hs} we can see that since $q_s < q^*_s$, the NCM of $h_s$ is the $h_s$ itself on $(-\infty,q_s)$ and on $(q_s,0)$ it is the tangent line to $h_s$ passing through $0$. With all the information we can explicitly write $\hh_s$ as
\begin{align}
    \hh_s(q) = \begin{cases}
    h_s(q)  & q \leq q_s
    \\
    q\frac{h_s(q_s)}{q_s} & q > q_s \label{eq:hhs-explicit}
    \end{cases}.
\end{align}
\subsection{Expressions for $\hh_b$}
 \label{sec:hhb_cal}

We plot $h_b$ defined in \eqref{eq:hb-def} using the parameters in \eqref{eq:ex_param} in Figure \ref{fig:hb}. Suppose that $q_b$ and $q^*_b$ is the critical point and inflection point of $h_b$, respectively, which can be numerically solved from finding the roots of the first and second derivative of $h_b$ in \eqref{eq:h-diff1} and \eqref{eq:h-diff2}. 
From Figure \ref{fig:hb}, we can see that the NCM of $h_b$ is the horizontal line (note that the NCM of an increasing function is the horizontal line of the maximum of that function) with value $q_b$ on $(-\infty,q_b)$. Since $q^*_b < q_b$, on $(q_b,0)$ the graph of $h_b$ is decreasing, nonnegative and concave, so clearly the NCM of this part of the function is $h_b$ itself. With all the information we can write $\hh_b$ as
\begin{align}
    \hh_b(q) = \begin{cases}
        h_b(q_b) & q \leq q_b
        \\
        h_b(q) &  q > q_b \label{eq:hhb-explicit}
    \end{cases},
\end{align}
where $q_b$ can be solved by setting \eqref{eq:h-diff1} to zero which is equivalent to
\begin{align}
    \frac{u'_+(r_b)}{u_+(r_b)} = \frac{f'_b(r_b)}{f_b(r_b)}. \label{eq:rb}
\end{align}
We solve for the buying threshold $r_b$ using \eqref{eq:rb} and set $q_b = g(r_b)$ to obtain $q_b$.
\section{Constant discount rate}
\label{sec:constant-discount}
Suppose that, for a payment $P_t$ received at time $t$ the investor assigns a present value of $\Eb\big(\ee^{-\chi t}P_t\big)$ instead of $\Eb\big(\ee^{-\chi\int_0^t R_s \dd s}P_t\big)$ as in Section \ref{sec:optimal}. Revising the processes in Section \ref{sec:optimal} to obtain the optimal buying and selling problems, we can see that all steps applied in Section \ref{sec:optimal} can also be applied in this setting, except that the process $\Lam$ is changed to a modified process reflecting the change in integral discounting rate to constant discounting rate. The modified process $\Lamt =(\Lamt_t)_{t\geq0}$ is defined by 
\begin{align}
\Lamt_t 
	&:= \chi t -\gam\int_{0}^{t} \, R_s \dd s.
\end{align}
Having the form of the modified process, we can define a \emph{modified buying value function and selling function}, $\Jt_b$ and $\Jt_s$ as
\begin{align}
    \Jt_b(r) &:=	\sup_{\tau_b \in \Tc}\Eb \Big[ \ee^{-\Lamt_{\tau_b}} f_b(R_{\tau_b}) \Big| R_0 = r \Big],
     &
    \Jt_s(r)
	&:=	\sup_{\tau_s \in \Tc}\Eb \Big[ \ee^{-\Lamt_{\tau_s}} f_s(R_{\tau_s}) \Big| R_0 = r \Big],
     \label{eq:Jt-def}
\end{align}
and \emph{modified optimal selling and buying strategies} $\taut^*_s$ and $\taut^*_b$ are defined in the same way as \eqref{eq:taus_sol} and \eqref{eq:taub_sol}.  The functions $f_s$ and $f_b$ remain unchanged and are given by \eqref{eq:fs-def} and \eqref{eq:fb-def}, respectively. In this setting, the ODE \eqref{eq:uode} becomes
\begin{align}
    \Big(\Ac-\chi + \gam r \Big) \ut(r) & = 0. \label{eq:uspecificnew}
\end{align}
Note that for a given generator $\Ac$ and constant $\gam$, positive strictly increasing and positive strictly decreasing solutions of \eqref{eq:uspecificnew} will only exist if $\chi$ is larger than some threshold value.

\subsection{CIR Interest Rate}
\label{sec:constant-discount-cir}
We now focus on the case in which the interest rate $R$ is modeled by CIR process. In this setting, \eqref{eq:uspecificnew} is given by
\begin{align}
    \Big(\kappa (\theta - r ) \d_r + \tfrac{1}{2} \sig^2 r \d_r^2-\chi + \gam r \Big) \ut(r) & = 0. 
\end{align}
The solution $\ut = (\ut_+,\ut_-)$, where $\ut_+$ is positive strictly increasing and $\ut_-$ is positive strictly decreasing are of the form \eqref{eq:u12new}
\begin{align}
        \ut_+(r) & = \ee^{-\nu r}  M(\alpha,\beta,\zeta r),
        &
        \ut_-(r) & = \ee^{-\nu r}  U(\alpha,\beta,\zeta r),\label{eq:u12new}
\end{align}
where the parameters $(\alpha, \beta, \xi, \zeta, \nu)$ are defined as
\begin{align}
    \alpha & :=\frac{\kappa \theta  }{{\sigma ^2}} \left(1-\frac{\kappa-\sig^2 \chi/\kappa\theta }{\xi}\right),
    &
    \beta & := \frac{2\kappa \theta   }{\sigma ^2}, & \xi & := \sqrt{\kappa^2 -2\gam\sig^2},
    \\
    \zeta & := \frac{2 \xi}{\sigma ^2}, & \nu &  = \frac{\xi-\kappa}{\sig^2}. \label{eq:u12new_param}
\end{align}
Note that while the form of the functions \eqref{eq:u12new} is the same as \eqref{eq:u12}, the value of the constants has been modified, reflecting the change from stochastic to constant discounting. 
\\[0.5em]
We would like to find a sufficient condition for positive strictly increasing solutions $\ut_+$ and positive strictly decreasing solutions $\ut_-$ to exist. To this end, we note that the derivative of $\ut_+$ can be written using \eqref{eq:u+diff} as
\begin{align}
    \frac{\dd u_+(r)}{\dd r} = -\nu\ee^{-\nu r} (\alpha-\beta) \sum_{n=0}^{\infty}\frac{(\alpha+1)\ldots (\alpha+n)}{(\beta+1)\ldots (\beta+n)}\Big( \frac{1+n}{\beta(\beta+1+n)} \Big) \frac{(\zeta r)^n}{n!}.
\end{align}
As  $\nu < 0, \zeta > 0$, the derivative is guaranteed to be positive if $\alpha > \beta > 0$, which is equivalent to
\begin{align}
    \chi > \frac{\beta}{2}\left( \kappa-\sqrt{\kappa^2-2\gam \sig^2} \right).
\end{align}
Thus, if $\chi$ is large enough, $\ut_+$ will be strictly increasing.  Performing similar analysis using the derivative of $u_-$ in \eqref{eq:u-diff}, we can see that this condition is also sufficient to guarantee $\ut_-$ is decreasing.

\bibliography{bibliography}

\begin{thebibliography}{15}
\providecommand{\natexlab}[1]{#1}
\providecommand{\url}[1]{\texttt{#1}}
\expandafter\ifx\csname urlstyle\endcsname\relax
  \providecommand{\doi}[1]{doi: #1}\else
  \providecommand{\doi}{doi: \begingroup \urlstyle{rm}\Url}\fi

\bibitem[mor(2022)]{mortgage_rate_2022}
S\&p dow jones indices llc, s\&p/case-shiller u.s. national home price index
  [csushpinsa], Jan 2022.
\newblock URL \url{https://fred.stlouisfed.org/series/CSUSHPINSA}.

\bibitem[pri(2022)]{price_index_2022}
Freddie mac, 30-year fixed rate mortgage average in the united states
  [mortgage30us], Jan 2022.
\newblock URL \url{https://fred.stlouisfed.org/series/MORTGAGE30US}.

\bibitem[Albrecht et~al.(2016)Albrecht, Gautier, and
  Vroman]{albrecht2016directed}
J.~Albrecht, P.~A. Gautier, and S.~Vroman.
\newblock Directed search in the housing market.
\newblock \emph{Review of Economic Dynamics}, 19:\penalty0 218--231, 2016.

\bibitem[Anglin(2004)]{anglin2004long}
P.~M. Anglin.
\newblock How long does it take to buy one house and sell another?
\newblock \emph{Journal of Housing Economics}, 13\penalty0 (2):\penalty0
  87--100, 2004.

\bibitem[Brown et~al.(2013)Brown, McGreal, and Adair]{brown2013role}
L.~Brown, S.~McGreal, and A.~Adair.
\newblock The role of bidding in determining sales price for residential
  property.
\newblock \emph{Journal of Housing Research}, 22\penalty0 (1):\penalty0 39--57,
  2013.

\bibitem[Bruss and Ferguson(1997)]{bruss1997multiple}
F.~T. Bruss and T.~S. Ferguson.
\newblock Multiple buying or selling with vector offers.
\newblock \emph{Journal of Applied Probability}, 34\penalty0 (4):\penalty0
  959--973, 1997.

\bibitem[Carmona and Le{\'o}n(2007)]{carmona2007investment}
J.~Carmona and A.~Le{\'o}n.
\newblock Investment option under cir interest rates.
\newblock \emph{Finance Research Letters}, 4\penalty0 (4):\penalty0 242--253,
  2007.

\bibitem[Dayanik(2008)]{dayanik2008optimal}
S.~Dayanik.
\newblock Optimal stopping of linear diffusions with random discounting.
\newblock \emph{Mathematics of Operations Research}, 33\penalty0 (3):\penalty0
  645--661, 2008.

\bibitem[Egozcue et~al.(2013)Egozcue, Garc{\'\i}a, and
  Zitikis]{egozcue2013optimal}
M.~Egozcue, L.~F. Garc{\'\i}a, and R.~Zitikis.
\newblock An optimal strategy for maximizing the expected real-estate selling
  price: accept or reject an offer?
\newblock \emph{Journal of Statistical Theory and Practice}, 7\penalty0
  (3):\penalty0 596--609, 2013.

\bibitem[Filipovic(2009)]{filipovic2009term}
D.~Filipovic.
\newblock \emph{Term-Structure Models. A Graduate Course.}
\newblock Springer, 2009.

\bibitem[Leung and Tse(2017)]{leung2017flipping}
C.~K.~Y. Leung and C.-Y. Tse.
\newblock Flipping in the housing market.
\newblock \emph{Journal of Economic Dynamics and Control}, 76:\penalty0
  232--263, 2017.

\bibitem[Leung and Li(2015)]{leung2015optimal}
T.~Leung and X.~Li.
\newblock Optimal mean reversion trading with transaction costs and stop-loss
  exit.
\newblock \emph{International Journal of Theoretical and Applied Finance},
  18\penalty0 (03):\penalty0 1550020, 2015.

\bibitem[Leung et~al.(2014)Leung, Li, and Wang]{leung2014optimal}
T.~Leung, X.~Li, and Z.~Wang.
\newblock Optimal starting--stopping and switching of a cir process with fixed
  costs.
\newblock \emph{Risk and Decision Analysis}, 5\penalty0 (2-3):\penalty0
  149--161, 2014.

\bibitem[Linetsky(2004)]{linetsky2004computing}
V.~Linetsky.
\newblock Computing hitting time densities for cir and ou diffusions:
  Applications to mean-reverting models.
\newblock \emph{Journal of Computational Finance}, 7:\penalty0 1--22, 2004.

\bibitem[Shreve et~al.(1994)Shreve, Soner, et~al.]{shreve1994optimal}
S.~Shreve, H.~Soner, et~al.
\newblock Optimal investment and consumption with transaction costs.
\newblock \emph{The Annals of Applied Probability}, 4\penalty0 (3):\penalty0
  609--692, 1994.

\end{thebibliography}

\clearpage

\begin{figure}
\centering
\begin{tabular}{cc}
\includegraphics[width=0.5\textwidth]{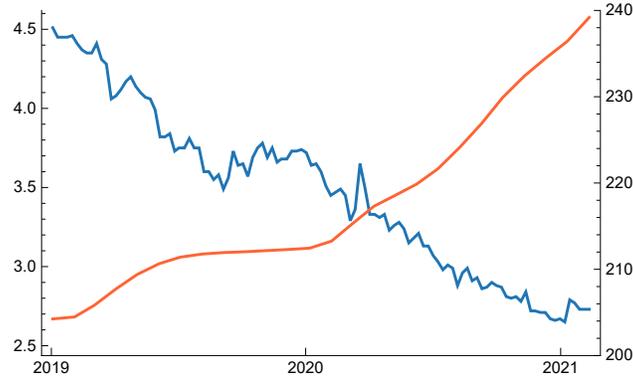}
\\ [1em]
\end{tabular}
\caption{We plot, from January 2019 to January 2021, the Monthly S\&P/Case-Shiller U.S. National Home Price Index \cite{price_index_2022} in orange with the scale on the right vertical axis, and Weekly 30-Year Fixed Rate Mortgage Average in the United States \cite{mortgage_rate_2022} in blue with the scale on the left vertical axis. Note that decreasing the federal mortgage rate has an effect of increasing the home price index during this short-term period.}
\label{fig:house-rate-price}
\end{figure}


\begin{figure}
\centering
\begin{tabular}{cc}
\includegraphics[width=0.5\textwidth]{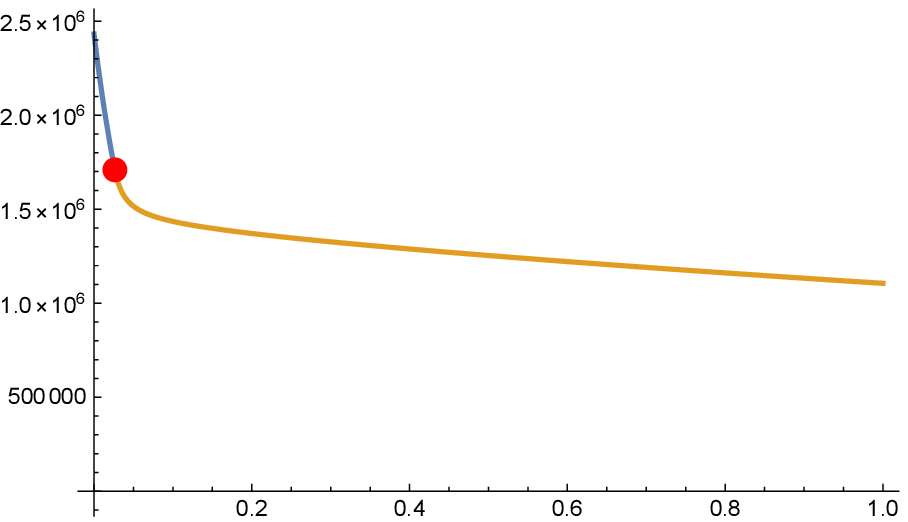}&
\includegraphics[width=0.5\textwidth]{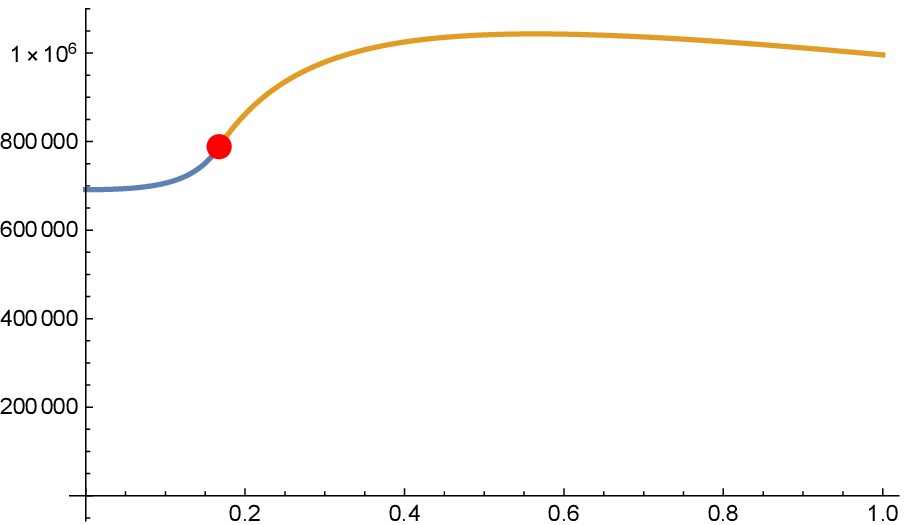}\\
$J_s(r), 0 < r < 1,$ & 
$J_b(r), 0 < r < 1.$ \\ [1em]
\end{tabular}
\caption{We plot the optimal selling and buying functions $J_s(r)$ and $J_b(r)$ when the interest rate is modeled by CIR process using \eqref{eq:Js-sol} and \eqref{eq:Jb-sol} as a function of risk-free rate of interest $r$ where $0 < r < 1$ using parameters defined in \eqref{eq:ex_param}. We numerically solved for the selling and buying threshold using \eqref{eq:rs} and \eqref{eq:rb} to obtain $r_s \approx 0.026$ and $r_b \approx 0.167$ which is shown as red points in each respective graph.}
\label{fig:J_sol}
\end{figure}
\begin{figure}
\centering
\begin{tabular}{cc}
\includegraphics[width=0.45\textwidth]{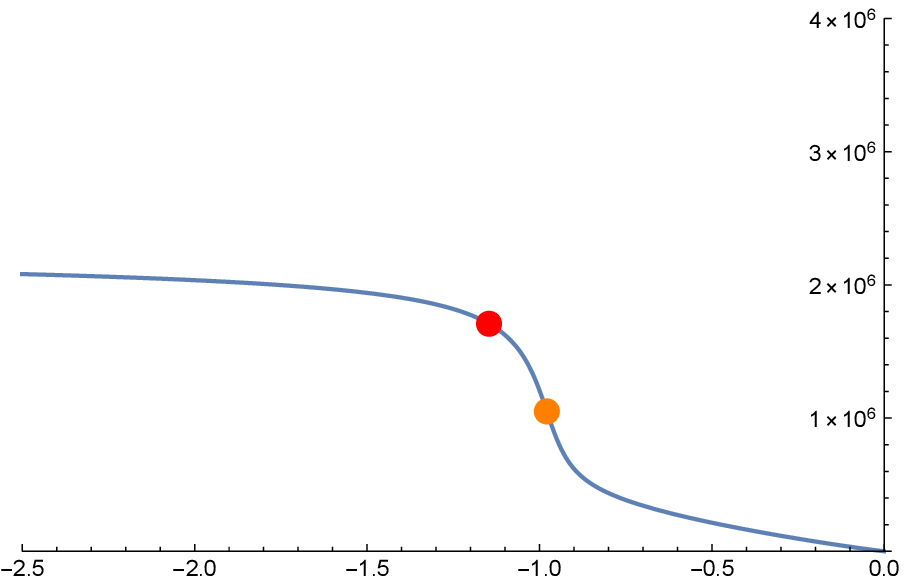}&
\includegraphics[width=0.45\textwidth]{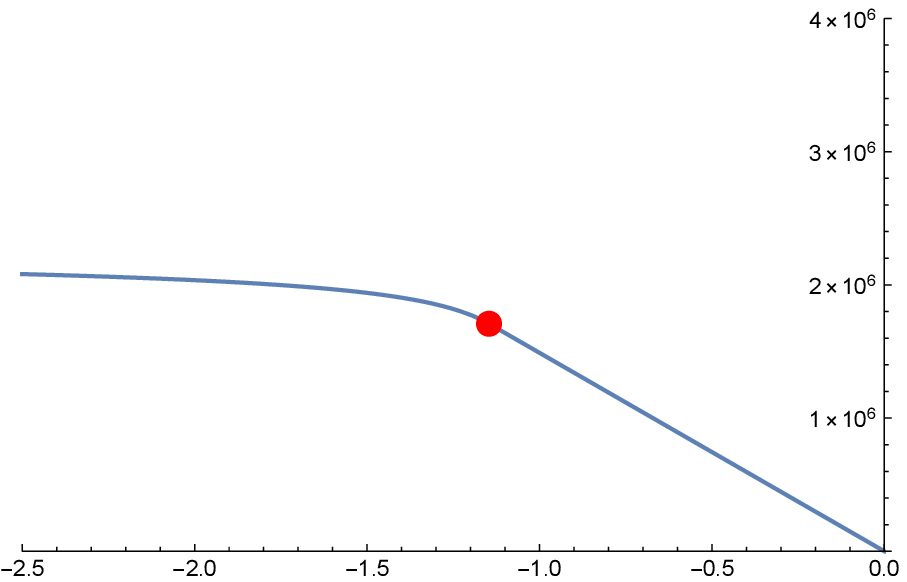}\\
$h_s(q), -2.5 < q < 0,$ & $\hh_s(q), -2.5 < q <0.$\\ [1em]
\end{tabular}
\caption{We plot the graphs of the functions $h_s(q)$ defined in \eqref{eq:hs-def} and its NCM $\hh_s(q)$ defined in \eqref{eq:hhs-explicit} using parameter \eqref{eq:ex_param} for $-2.5 < q < 0$. The red point shows the point $q_s$ which is numerically solved from \eqref{eq:qs}, and the orange point shows the inflection point $q^*_s$ of $h_s$ which is numerically solved using the expression of $h''_s$ in \eqref{eq:h-diff2}.}
\label{fig:hs}
\end{figure}
\begin{figure}
\centering
\begin{tabular}{cc}
\includegraphics[width=0.45\textwidth]{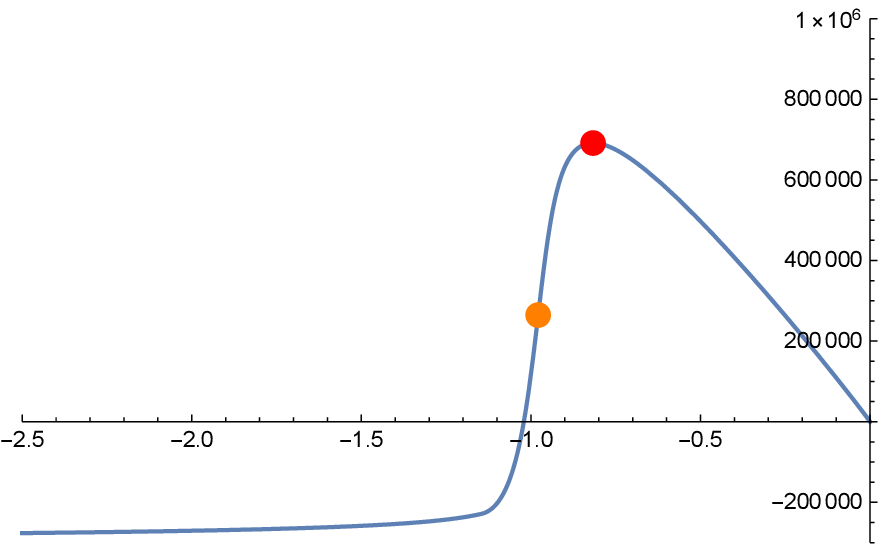}&
\includegraphics[width=0.45\textwidth]{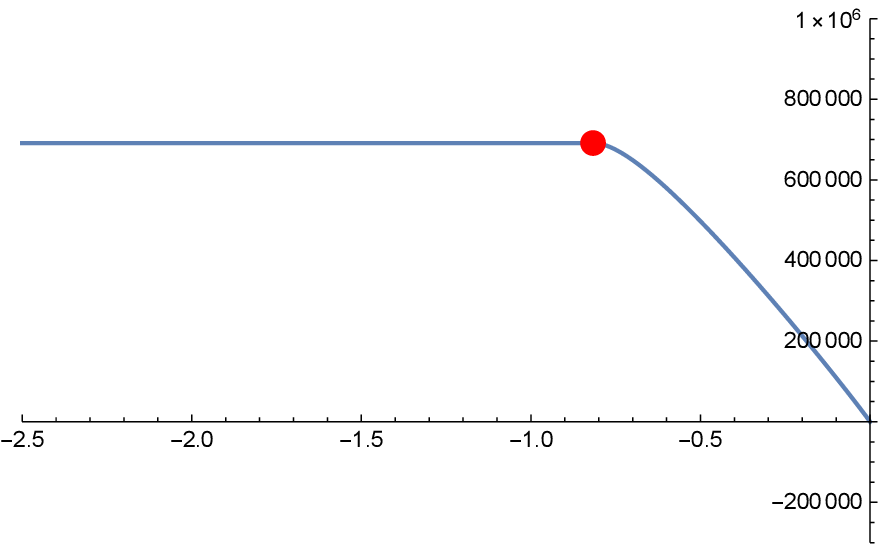}\\
$h_b(q), -2.5 < q < 0,$ & $\hh_b(q), -2.5 < q <0.$ \\ [1em]
\end{tabular}
\caption{We plot the graphs of the functions $h_b(q)$ defined in  \eqref{eq:hb-def} and its NCM $\hh_b(q)$ defined in \eqref{eq:hhb-explicit} using parameter \eqref{eq:ex_param} for $-2.5 < q < 0$. The red point shows the critical point $q_b$ of $h_b$ which is solved numerically using the expression $h'_b$ in \eqref{eq:h-diff1} and orange point shows the inflection point $q^*_b$ of $h_b$ which is numerically solved using the expression of $h''_b$ in \eqref{eq:h-diff2}.}
\label{fig:hb}
\end{figure}

\begin{figure}
\centering
\begin{tabular}{cc}
\includegraphics[width=0.45\textwidth]{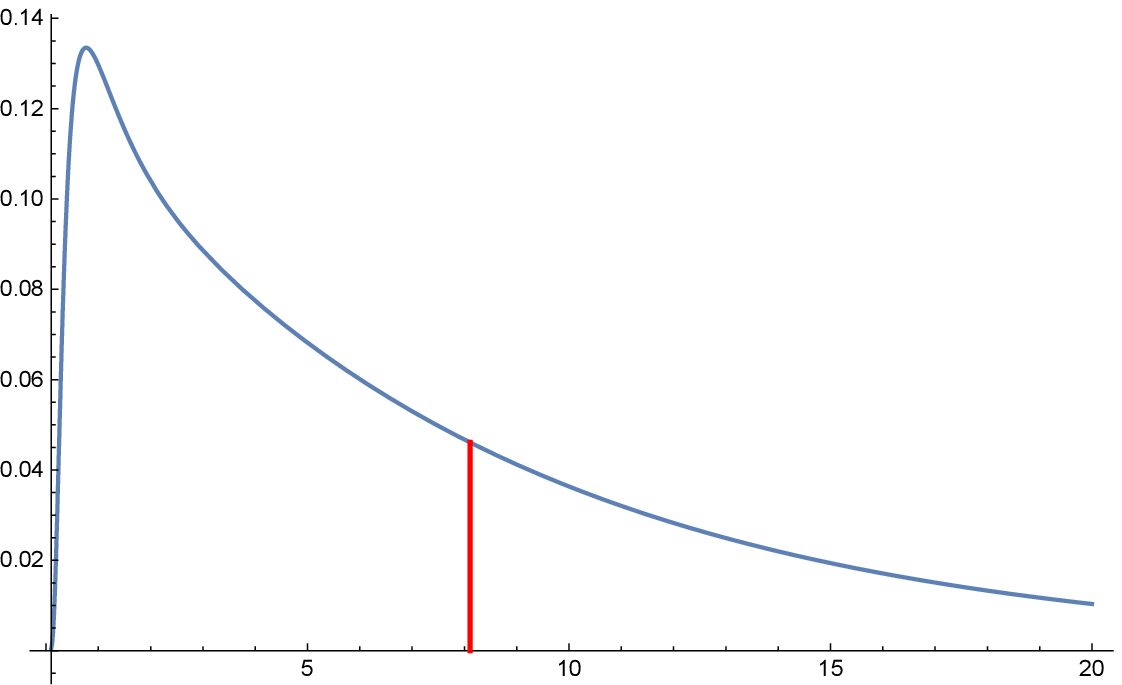}&
\includegraphics[width=0.45\textwidth]{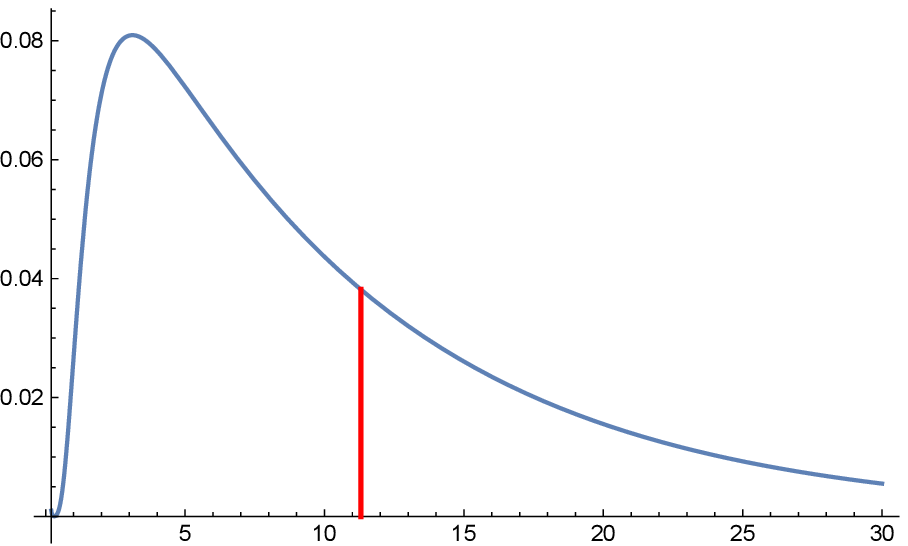}\\
$p_{\tau^*_b}(t;r), 0 < t < 20,$ & $p_{\tau^*_s}(t;r_b), 0 < t < 30$\\ [1em]
\end{tabular}
\begin{tabular}{c}
\includegraphics[width=0.5\textwidth]{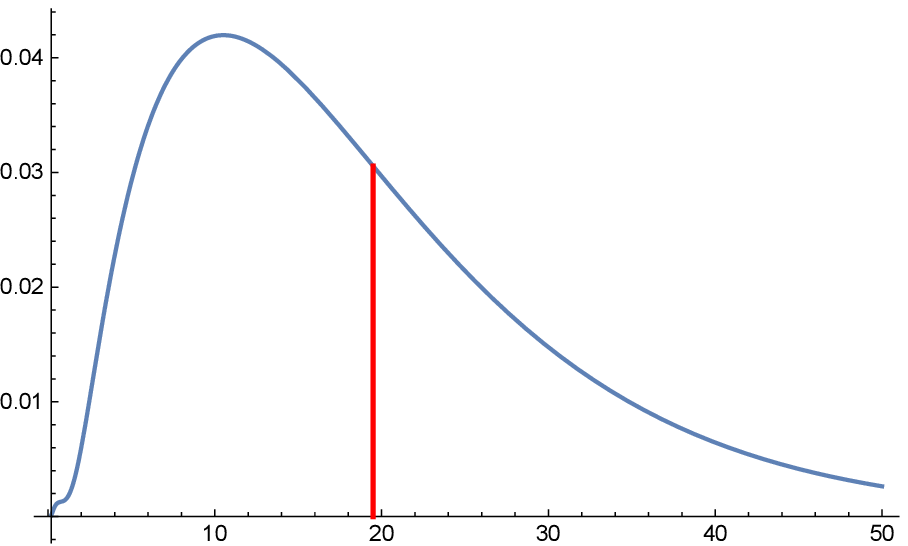}
\\
$p_{\tau^*_b+\tau^*_s}(t;r), 0 < t < 50$
\end{tabular}
\caption{The following plots used $r = 0.08$ and other parameters from \eqref{eq:ex_param}. The selling and buying threshold $r_s$ and $r_b$ are calculated numerically using \eqref{eq:rs} and \eqref{eq:rb} to obtain $r_s \approx 0.026$ and $r_b \approx 0.167$. We plot {the density of the length of time the investor waits before buying} $p_{\tau^*_b}(t;r)$ defined in \eqref{eq:pdf-buy} for $0 < t < 20$ using the first 100 terms of the truncated infinite sum. {The density of the length of time the investor waits before selling} $p_{\tau^*_s}(t;r_b)$ defined in \eqref{eq:pdf-sell} for $0 < t <30$ is also plotted using the first 100 terms of the truncated infinite sum. Lastly, {the density of the total time the investor waits to buy and sell a home} $p_{\tau^*_b+\tau^*_s}(t;r)$ defined in \eqref{eq:pdf-both} for $0 < t < 50$ is plotted using the first 100 indices in the truncated double infinite sum, giving a total of 10000 terms for the approximation. The expectations for each of the random variable calculated in \eqref{eq:all-ave} are shown as a vertical red bar in the respective graphs.}

\label{fig:pdfwait}
\end{figure}
\end{document}